\documentclass[reprint,superscriptaddress,amsmath,amssymb,aps,pra,floatfix]{revtex4-2}
\usepackage{graphicx} 
\usepackage{xcolor}
\usepackage{etoolbox}
\usepackage{physics}
\usepackage{mathtools}
\usepackage{bbm}
\usepackage{amsthm}
\usepackage{hyperref}
\usepackage{comment}
\usepackage{enumerate}
\hypersetup{
    colorlinks,
    citecolor=blue,
    filecolor=blue,
    linkcolor=blue,
    urlcolor=blue
}

\DeclareMathAlphabet{\mymathbb}{U}{BOONDOX-ds}{m}{n}

\newtheorem{theorem}{Theorem}
\newtheorem{proposition}{Proposition}
\newtheorem{corollary}{Corollary}

\usepackage{cleveref}
\crefname{equation}{Eq.}{Eqs.}
\crefname{figure}{Fig.}{Figs.}
\crefname{section}{Sec.}{Secs.}
\crefname{theorem}{Theorem}{Theorems}
\crefname{corollary}{Corollary}{Corollaries}
\crefname{proposition}{Proposition}{Propositions}
\crefname{appendix}{Appendix}{Appendices}
\crefname{table}{Table}{Tables}

\newcommand{\iket}[1]{\lvert#1\rangle}

\newcommand{\braW}[1][M]{\bra{W_{\!#1}}}
\newcommand{\ketW}[1][M]{\ket{W_{\!#1}}}
\newcommand{\ketbraW}[1][M]{\ketbra{W_{\!#1}}}
\newcommand{\rhoW}[1][M]{\rho_{W_{\!#1},\eta}}
\NewDocumentCommand{\iketCat}{ O{M} O{\gamma} }{\iket{\operatorname{cat}_{\!#1}(#2)}}
\NewDocumentCommand{\ketCat}{ O{M} O{\gamma} }{\ket{\operatorname{cat}_{\!#1}(#2)}}
\newcommand{\rhoCat}[1][M]{\rho_{\operatorname{cat}_{\!#1}(\gamma),\eta}}

\allowdisplaybreaks
\begin{document}

\title{Witnessing genuine multipartite entanglement in phase space\texorpdfstring{\\}{ }with controlled Gaussian unitaries}

\author{Lin Htoo Zaw}
\affiliation{Centre for Quantum Technologies, National University of Singapore, 3 Science Drive 2, Singapore 117543}

\author{Jiajie Guo}
\email{jjguo@pku.edu.cn}
\affiliation{State Key Laboratory of Artificial Microstructure and Mesoscopic Physics, School of Physics, Frontiers Science Center for Nano-optoelectronics, \& Collaborative Innovation Center of Quantum Matter, Peking University, Beijing 100871, China}

\author{Qiongyi He}
\affiliation{State Key Laboratory of Artificial Microstructure and Mesoscopic Physics, School of Physics, Frontiers Science Center for Nano-optoelectronics, \& Collaborative Innovation Center of Quantum Matter, Peking University, Beijing 100871, China}
\affiliation{Collaborative Innovation Center of Extreme Optics, Shanxi University, Taiyuan, Shanxi 030006, China}
\affiliation{Hefei National Laboratory, Hefei 230088, China}

\author{Shuheng Liu}
\email{liushuheng@pku.edu.cn}
\affiliation{State Key Laboratory of Artificial Microstructure and Mesoscopic Physics, School of Physics, Frontiers Science Center for Nano-optoelectronics, \& Collaborative Innovation Center of Quantum Matter, Peking University, Beijing 100871, China}

\author{Matteo Fadel}
\email{fadelm@phys.ethz.ch}
\affiliation{Department of Physics, ETH Z\"{u}rich, 8093 Z\"{u}rich, Switzerland}

\begin{abstract}
Many existing genuine multipartite entanglement (GME) witnesses for continuous-variable (CV) quantum systems typically rely on quadrature measurements, which are challenging to implement in platforms where the CV degrees of freedom can be indirectly accessed only through qubit readouts.
In this work, we propose methods to implement GME witnesses through phase-space measurements in state-of-the-art experimental platforms, leveraging controlled Gaussian unitaries readily available in qubit-CV architectures.
Based on two theoretical results showing that sufficient Wigner negativity can certify GME, we present five concrete implementation schemes using controlled parity, displacement, and beam-splitter operations.
Our witnesses can detect paradigmatic GME states like the Dicke and multipartite $N00N$ states, which include the $W$ states as a special case, and GHZ-type entangled cat states.
We analyze the performance of these witnesses under realistic noise conditions and finite measurement resolution, showing their robustness to experimental imperfections.
Crucially, our implementations require exponentially fewer measurement settings than full tomography, with one scheme requiring only a single measurement on auxiliary modes.
The methods are readily applicable to circuit/cavity quantum electrodynamics, circuit quantum acoustodynamics, as well as trapped ions and atomic systems, where such dichotomic phase-space measurements are already routinely performed as native readouts.
\end{abstract}

\maketitle

\section{Introduction}
In a wide variety of quantum systems, such as cavity/circuit quantum-electrodynamics (cQED) \cite{cQED1,cQED2,cQED3,cQED4}, circuit quantum-acoustodynamics (cQAD) \cite{cQAD}, as well as trapped ions \cite{trappedIons1,trappedIons2} and atoms \cite{trappedAtoms}, the continuous-variable (CV) degrees of freedom are not directly characterized by quadrature measurements.
Instead, properties of the CV states are inferred by first coupling it to a qubit degree of freedom, for example through a Jaynes-Cummings interaction, then measuring the state of the qubit.
Thus, only dichotomic (i.e., binary) measurements are typically performed, like the expectation values of the parity using a controlled rotation, or the real or imaginary part of the displacement operator using a controlled displacement.
These correspond, respectively, to pointwise measurements of the Wigner and characteristic functions \cite{cQED-parity-1,cQED-parity-2,cQED-CD-1,cQED-CD-2,cQED-CD-3,cQAD}. In other words,  phase-space measurements are native in such systems.

In this work, we are interested in the detection of genuine multipartite entanglement (GME) in CV systems \cite{first-GME-paper}.
GME represents the strongest form of multipartite entanglement, characterized by quantum correlations that involve all parties of a system.
As a fundamental quantum resource, GME plays a crucial role in distributed quantum computing and quantum communication protocols \cite{GME-useful-distributed-computing,GME-useful-secret-sharing} and is also essential for achieving the ultimate sensitivity limit in quantum metrology \cite{TothPRA2012}.

As with other forms of quantum correlations, the detection of GME is typically achieved through witnesses, namely inequalities that are satisfied for all non-GME states.
Experimental measurements resulting in a violation of such inequalities then reveal the presence of GME.
Crucially, witnesses require significantly fewer measurements than full quantum state tomography \cite{chou2025deterministic,he2024efficient}, making them far more economical to implement in practice, while also introducing fewer systematic errors \cite{SchwemmerSystematic2015}.

Most existing CV multipartite entanglement witnesses, however, rely on quadrature measurements \cite{shalm2013three,ZhangGenuine2023,HyllusOptimal2006,TehPRA14,TehPRA22,Leskovjanova2025minimalcriteria,ToscanoSystematic2015}, which makes them particularly suited to optical platforms.
In contrast, entanglement witnesses relying on binary phase-space measurements have only been derived for bipartite scenarios \cite{DetectingZhang2013,GittsovichNonclassicality2015,GholipourShahandeh2016,JayachandranDynamics2023,ZawCertifiable2024,liu2024quantumentanglementphasespace}.
A practical generalization of the latter criteria to multipartite cases is highly valuable but still missing from the existing literature.

In this work, we fill this gap by proposing experimentally feasible implementations of GME witnesses based on phase-space measurements in state-of-the-art experimental platforms, which require exponentially fewer measurement settings than full state tomography.
For this, we build upon two primary theoretical results from our companion Letter \cite{CV-GME-letter}, where we demonstrate complementary ways in which Wigner negativities can reveal GME.
Specifically, GME states can exhibit sufficiently large negativity volumes in certain phase-space slices, exceeding the bounds allowed for biseparable states, and can feature negativities sharp enough to withstand smoothing operations that would completely remove them in all biseparable states.

Based on these principles, we present five concrete schemes implementing GME witnesses using conditional Gaussian unitaries that are accessible in current qubit-CV architectures.
We analyze the performances of these witnesses for detecting paradigmatic examples of GME states under realistic experimental imperfections, including the effects of noise, losses, and finite measurement resolution.
Our results show how different criteria offer complementary advantages in terms of measurement overhead, noise resilience, and detectable states.

\section{Background and Motivation}

Let us begin by recalling the theoretical background of phase-space quantum information, genuine multipartite entanglement, and the standard operations and measurements available in the experimental platforms under consideration.

\subsection{\label{sec:phase-space-QI}Phase-space quantum information}

We consider CV systems comprising $M$ modes, which are specified by the annihilation operators $\vec{a} \coloneqq (a_1,a_2,\dots,a_M)$ that satisfy the canonical commutation relations
\begin{equation}\label{eq:CCR}
    \comm{a_m}{a_{m'}} = \comm{a_m^\dag}{a_{m'}^\dag} = 0, \qquad
    \comm{a_m}{a_{m'}^\dagger} = \mathbbm{1} \delta_{m,m'} .
\end{equation}

In CV quantum information, states of this multimode system are conveniently represented in phase space, namely the 2$M$-dimensional space specified by the position and momentum operators associated to each mode.

To work in this phase-space picture, we introduce the parity $\Pi_m \coloneqq e^{-i\pi a_m^\dag a_m}$ and displacement $D_m(\alpha_m) \coloneqq e^{ \alpha_m a_m^\dagger - \alpha_m^* a_m}$ operators defined on the $m$th mode.
They act on the $m$th annihilation operator as $\Pi_m^\dag a_m \Pi_m = -a_m$ and $D_m^\dag (\alpha_m) a_m D_m (\alpha_m) = a_m + \alpha_m$.
We can further specify corresponding operators that act on all modes as $\Pi \coloneqq \prod_m \Pi_m$ and $D(\vec{\alpha}) \coloneqq \prod_m D_m(\alpha_m)$, where $\vec{\alpha} = (\alpha_1,\alpha_2,\dots,\alpha_M)$.

Composing these operators gives us the displaced parity operator $\Pi(\vec{\alpha}) \coloneqq D(\vec{\alpha}) \Pi D^\dag(\vec{\alpha})$, which allows us to define the Wigner function of a state $\rho$ as
\begin{equation}
    W_{\rho}(\vec{\alpha}) \coloneqq \pqty{\frac{2}{\pi}}^{M}\tr[\rho \Pi(\vec{\alpha})].
\end{equation}
Here $\vec{\alpha}$ are phase-space coordinates such that $\Re(\alpha_m)$ is proportional to the position and $\Im(\alpha_m)$ the momentum of the $m$th mode.
The Wigner function is a quasiprobability distribution in phase space:
It has the property of a joint distribution of $\vec{\alpha}$ in that its marginals are probability distributions of the system's position and momentum, except that its value can be negative at some points \cite{quasiprobability-review}.
The presence of negativities in $W_\rho(\vec{\alpha})$ is one notion of nonclassicality, as it demonstrates that the observed behavior cannot be simulated by a joint classical probability distribution for $\vec{\alpha}$.

A related function is the characteristic function of the state $\rho$, which is defined as the Fourier transform of the Wigner function
\begin{equation}
    \chi_\rho(\vec{\xi}) \coloneqq \int_{\mathbb{C}^M}\dd[2M]{\vec{\alpha}}\; e^{\vec{\xi}\wedge\vec{\alpha}}\;
    W_\rho(\vec{\alpha}) = \tr[\rho D(\vec{\xi})].
\end{equation}
Here, $\vec{A} \wedge \vec{B} \coloneqq \sum_m A_m B_m^\dag - B_m A_m^\dag$ is the wedge product.
Importantly, both the Wigner and characteristic functions are tomographically complete in that they provide a full description of the state of the system equivalent to the information contained in its density matrix.

\subsection{Genuine multipartite entanglement}
Entanglement is a form of nonclassical correlation found in multipartite quantum systems.
Given a bipartition of the system, states that cannot be prepared using only local operations and classical communication between the parties are called entangled; otherwise, they are said to be (bi-)separable.

We focus here on GME, the strongest form of entanglement that can be present in a multipartite system.
A state is GME if it is not a statistical mixture of biseparable states.
More precisely for $M$-mode systems, $\rho$ is GME if
\begin{equation}
    \rho \neq \sum_{(\mathcal{A} \mid \bar{\mathcal{A}})} \sum_{k} p_{\mathcal{A}}^{(k)} \rho_{\mathcal{A}}^{(k)} \otimes \rho_{\bar{\mathcal{A}}}^{(k)} ,
\end{equation}
where $p_{\mathcal{A}}^{(k)} \geq 0$ is a probability distribution satisfying $\sum_{(\mathcal{A} \mid \bar{\mathcal{A}})} \sum_{k} p_{\mathcal{A}}^{(k)}=1$, the notation $(\mathcal{A}|\bar{\mathcal{A}})$ indicates  bipartitions of the modes into $\mathcal{A}=\{m_n\}_{n=1}^N$ and $\bar{\mathcal{A}}=\{m\}_{m=1}^M \setminus \mathcal{A}$ for $1\leq N<M$, and $\rho_{\{m_1, m_2, \ldots, m_N\}}^{(k)}$ are states defined locally on the $\{a_{m_1}, a_{m_2}, \ldots, a_{m_N}\}$ modes.

\subsection{\label{CV-qubit-readout}Continuous variable systems with qubit readouts}
In this work, we consider experimental platforms where the continuous variable degrees of freedom under investigation are measured through a two-level (qubit) system.
Such platforms include cQED and cQAD devices, as well as trapped ions and atoms systems.
In such architectures, measurements of the CV state are performed by first coupling the CV and qubit degrees of freedom, then reading out the qubit state to infer properties about the CV state.

Let us say that we wish to measure an observable on the CV degree of freedom, which we present as a quantum circuit in \cref{fig:cU-circuit}(a).
A common coupling choice is to perform a Gaussian unitary on a CV mode conditioned on the qubit degree of freedom.
For a unitary $U$, its controlled version is $cU \coloneqq \ketbra{\uparrow}\otimes\mathbbm{1} + \ketbra{\downarrow}\otimes U$.
Its representation as a circuit is shown in \cref{fig:cU-circuit}(b).
Note that this definition deviates slightly from the convention $\ketbra{\uparrow}{\downarrow}\otimes\mathbbm{1} + \ketbra{\downarrow}{\uparrow}\otimes U$ used in previous cQED literature, but they are equivalent up to a local qubit phase that can be reintroduced in postprocessing.

\begin{figure*}
    \centering
    \includegraphics{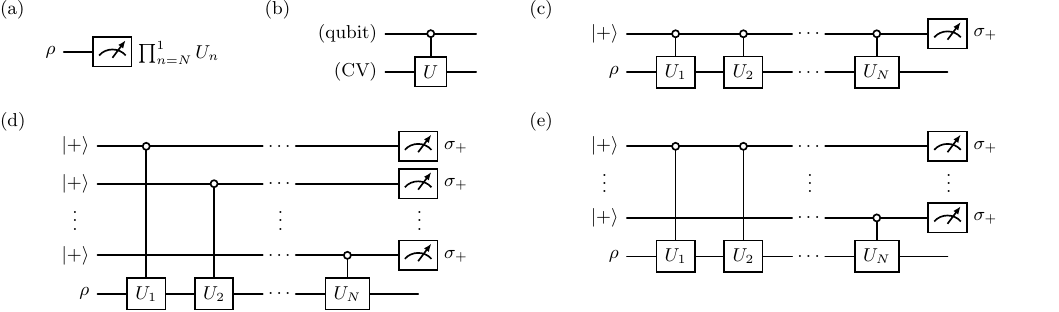}
    \caption{\label{fig:cU-circuit}
        Consider that like in (a), we wish to measure ${\tr}(\rho\prod_{n=N}^1 U_{n})$ with respect to the CV state $\rho$ using only qubit readouts.
        This can be implemented using (b), the controlled unitary $cU = \ketbra{\uparrow}\otimes\mathbbm{1} + \ketbra{\downarrow}\otimes U$.
        Then, ${\tr}(\rho \prod_{n=N}^1 U_{n})$ can be measured by first preparing all the readout qubits as $\ket{+}$, then either (c) performing all controlled unitaries on one readout qubit and measuring $\ev{\sigma_+}$, (d) performing them each on their own readout qubit and measuring $\ev{\sigma_{+}\otimes\sigma_{+}\otimes\cdots\otimes\sigma_{+}}$, or (e) some combination of both.
    }
\end{figure*}

Now, given the circuit in \cref{fig:cU-circuit}(c), which consists of initializing the system in the state $\ketbra{+}\otimes\rho$ with $\ket{\pm} \coloneqq (\ket{\uparrow}\pm \ket{\downarrow})/\sqrt{2}$, performing a sequence of controlled unitary gates $cU_1,cU_2,\dots$, before measuring the qubit.
Then the expectation value of $\sigma_x \coloneqq \ketbra{\uparrow}{\downarrow} + \ketbra{\downarrow}{\uparrow}$ gives
\begin{equation}
\begin{aligned}
\ev{\sigma_x} &= \tr[ \sigma_x cU_N \cdots cU_1 (\ketbra{+}\otimes\rho) cU_1^\dag \cdots cU_N^\dag ] \\
&= \Re[\tr\pqty\big{\rho \underbrace{U_{N} U_{N-1}\cdots U_1}_{=\prod_{n=N}^1 U_n})}],
\end{aligned}
\end{equation}
while the same for $\sigma_y \coloneqq -i\ketbra{\uparrow}{\downarrow} + i\ketbra{\downarrow}{\uparrow}$ gives $\ev{\sigma_y} = \Im[{\tr}(\rho \prod_{n=N}^1 U_{n})]$.
Hence, the readout of $\sigma_+ \coloneqq \sigma_x + i\sigma_y$ in the qubit degree of freedom gives us the expectation $\ev{\sigma_+} = {\tr}(\rho \prod_{n=N}^1 U_{n} )$ in the CV degree of freedom.

Depending on the physical layout and experimental limitations of the qubit-CV system, not all qubit and CV degrees-of-freedom might be connected, and it could also be the case that different qubit auxiliaries are required to perform different controlled unitaries.
Then, similar circuits can perform the same measurement using a different auxiliary qubit for each unitary instead of a single auxiliary qubit performing all of them, as shown in \cref{fig:cU-circuit}(d), while hybrids of both scenarios are also possible, as shown in \cref{fig:cU-circuit}(e).

In constructing our GME witnesses, we will use controlled versions of particular Gaussian unitaries.
Gaussian unitaries are those generated by Hamiltonians that are at most quadratic in the creation and annihilation operators.
In particular, we will primarily use the following subset of unitaries:

(1) Parity operators $\Pi_m = e^{-i\pi a_m^\dag a_m}$ and $\Pi = \prod_{m=1}^M \Pi_m = e^{-i\pi \vec{a}^\dag \vec{a}}$, with the actions $\Pi_m a_{m'\neq m} \Pi_m = a_{m'}$, $\Pi_m a_m \Pi_m = -a_m$, and $\Pi \vec{a} \Pi = -\vec{a}$.

(2) Displacements $D(\vec{\alpha}) = e^{ \vec{\alpha} \wedge \vec{a}}$, with the action $D^\dagger(\vec{\alpha})\vec{a}D(\vec{\alpha}) = \vec{a} + \vec{\alpha}$.

(3) Multiport beam splitters $e^{\vec{a}^\dag(\ln\mathbf{U})\vec{a}}$, where $\mathbf{U}$ is an $M \times M$ unitary matrix, with the action $e^{-\vec{a}^\dag(\ln\mathbf{U})\vec{a}}\,\vec{a}\,e^{\vec{a}^\dag(\ln\mathbf{U})\vec{a}} = \mathbf{U}\vec{a}$

One reason for focusing on these unitaries is that they have been demonstrated experimentally as controlled unitaries in many qubit-CV architectures.
We shall also study these three operations separately as primary building blocks for more complex Gaussian unitaries, instead of studying Gaussian unitaries in all generality.
This is because the required experimental parameters to implement the operations are generally quite different from one another, and thus a particular experimental setup might only be able to perform a strict subset of those operations.

Another feature of these controlled Gaussian unitaries is their relationships to some important quantities in the phase-space picture of CV quantum information.
As partly mentioned in \cref{sec:phase-space-QI}, controlled parities and multiport beam splitters provide direct measurements of the Wigner quasiprobability distribution, while controlled displacement operators provide direct measurements of its Fourier transform, the characteristic function.

\section{Summary of Theoretical Prerequisites}
\subsection{Primary theorems}
For convenience, we restate here some key theoretical results found in our companion Letter \cite{CV-GME-letter}, upon which we will base our additional results in the rest of the work.

The first is a corollary of Theorem 1 from our companion Letter that relates the integral of a Wigner function over a finite region to GME. \stepcounter{theorem}
\begin{corollary}[From Ref.~\cite{CV-GME-letter}: GME criterion with Wigner function measurements over a finite region]\label{col:GME-Wigner-witness}
    Let the absolute volume of the Wigner function on $\{\alpha\vec{y} + \alpha^*\vec{z} : \alpha \in \omega\}$, where $\vec{y}\circ\vec{y}^*-\vec{z}\circ\vec{z}^* = \vec{1}$ and $\omega \subseteq \mathbb{C}$ is Lebesgue-measurable, be
    \begin{equation}\label{eq:GMEWignerFiniteRegionDef}
    \begin{aligned}
        \mathcal{V}_{2D}(\rho;\omega) &\coloneqq \pqty{\frac{\pi}{2}}^{M-1}\int_{\omega}\dd[2]{\alpha} \abs\big{W_{\rho}\pqty{\alpha\vec{y} + \alpha^*\vec{z}}}.
    \end{aligned}
    \end{equation}
    Then, $\mathcal{V}_{2D}(\rho;\omega) > (2\sqrt{M-1})^{-1}$ implies that $\rho$ is GME.
\end{corollary}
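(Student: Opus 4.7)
The plan is to derive the corollary as an immediate consequence of Theorem 1 of the companion Letter \cite{CV-GME-letter}, which I read as asserting the full-plane bound $\mathcal{V}_{2D}(\rho;\mathbb{C}) \leq (2\sqrt{M-1})^{-1}$ whenever $\rho$ is biseparable across a fixed bipartition. Since $\abs{W_\rho(\vec{\alpha})}$ is a pointwise nonnegative integrand, restricting the region of integration from $\mathbb{C}$ to any subset $\omega$ can only decrease the integral, so the same numerical bound must hold for any finite region $\omega$.

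Explicitly, I would proceed in three steps. First, monotonicity of the integral gives
\begin{equation*}
\mathcal{V}_{2D}(\rho;\omega) = \pqty{\frac{\pi}{2}}^{M-1}\int_{\omega}\dd[2]{\alpha}\,\abs{W_\rho(\alpha\vec{y}+\alpha^*\vec{z})} \leq \pqty{\frac{\pi}{2}}^{M-1}\int_{\mathbb{C}}\dd[2]{\alpha}\,\abs{W_\rho(\alpha\vec{y}+\alpha^*\vec{z})} = \mathcal{V}_{2D}(\rho;\mathbb{C}).
\end{equation*}
Second, for any state that is biseparable across some bipartition, Theorem 1 then yields $\mathcal{V}_{2D}(\rho;\omega) \leq \mathcal{V}_{2D}(\rho;\mathbb{C}) \leq (2\sqrt{M-1})^{-1}$. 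Third, a convex-combination argument via the triangle inequality $\abs{\sum_k p_k W_{\rho_k}} \leq \sum_k p_k \abs{W_{\rho_k}}$ shows that the bound is preserved under mixtures of biseparable states, so every non-GME $\rho$ obeys $\mathcal{V}_{2D}(\rho;\omega) \leq (2\sqrt{M-1})^{-1}$. Taking the contrapositive, $\mathcal{V}_{2D}(\rho;\omega) > (2\sqrt{M-1})^{-1}$ forces $\rho$ to be GME.

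The main (and essentially only) obstacle lives not in this corollary but in the parent Theorem 1: that is where the numerical threshold $(2\sqrt{M-1})^{-1}$ is established, where the normalization factor $(\pi/2)^{M-1}$ arises, and where the geometric significance of the per-mode condition $\vec{y}\circ\vec{y}^*-\vec{z}\circ\vec{z}^*=\vec{1}$ must be exploited — this condition ensures that each coordinate $\alpha_m = \alpha y_m + \alpha^* z_m$ is a properly normalized symplectic-type coordinate for the $m$-th mode, so that the 2D slice inherits a well-defined phase-space structure. Once that theorem is granted, the present corollary is a short monotonicity-and-convexity argument, and carrying out the contraposition is the only remaining task.
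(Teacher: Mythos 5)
Your proposal is correct and follows essentially the same route as the paper, which likewise treats this statement as an immediate consequence of Theorem~1 of the companion Letter: the full-slice bound $\mathcal{V}_{2D}(\rho;\mathbb{C})\leq(2\sqrt{M-1})^{-1}$ for non-GME states, combined with monotonicity of the integral of $\abs{W_\rho}$ under restriction to $\omega\subseteq\mathbb{C}$, convexity over mixtures of biseparable states, and contraposition. The only substantive content, as you note, lives in the parent theorem of the Letter (which must bound all biseparable states, not merely Wigner-positive ones, for the GME conclusion to follow), and the paper provides no further detail beyond citing it.
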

Here, $[\mathbf{A}\circ\mathbf{B}]_{m,n} = [\mathbf{A}]_{m,n}[\mathbf{B}]_{m,n}$ is the element-wise product and $\vec{1} = (1,1,\dots,1)$ is a vector of ones.
The quantity $\mathcal{V}_{2D}$ quantifies the amount of negativity present along the two-dimensional slice $\{\alpha\vec{y} + \alpha^*\vec{z} : \alpha \in \omega\}$ of phase space, and it was also shown in the Letter that $\mathcal{V}_{2D} \leq (2\sqrt{M-1})^{-1}$ for all Wigner positive states.

The next theorem concerns the Wigner function of the center-of-mass mode $a_{+}$, which is defined as
\begin{equation}
    a_+ \coloneqq \frac{\vec{y}^T\vec{a} + \vec{a}^\dag\vec{z}}{\sqrt{M}} = \frac{1}{\sqrt{M}}\sum_{m=1}^{M}\pqty{y_m a_m + z_m a_m^\dag},
\end{equation}
where $\vec{y},\vec{z}\in\mathbb{C}^M$ with $\vec{y}\circ\vec{y}^*-\vec{z}\circ\vec{z}^* = \vec{1}$ as before.
If left unspecified, take $\vec{y} = \vec{1}$ and $\vec{z} = (0,\dots,0)$.
Further specifying $M-1$ relative modes $\{a_{-m}\}_{m=2}^{M}$ such that $\{a_+\}\cup\{a_{-m}\}_{m=2}^{M}$ satisfies the canonical commutation relations, the reduced state $\tr_-\rho$ that describes only the center-of-mass motion of the system is then
\begin{equation}
    \tr_-\rho \coloneqq \tr_{a_{-2}}\tr_{a_{-3}}\cdots\tr_{a_{-M}}\rho.
\end{equation}
With these definitions, we can now state the theorem.

\begin{theorem}[From Ref.~\cite{CV-GME-letter}: Negativity of the smoothed Wigner function of the center of mass implies GME]\label{thm:GMN-reduced-state}
    Choose $M-2$ states $\mathcal{R} = \{\varrho_m\}_{m=1}^{M-2}$. Define the smoothed Wigner function of the center of mass of the system as
\begin{equation}
\widetilde{W}_{\tr_{-}\rho}(\alpha;\mathcal{R})\\
    \coloneqq \int_{\mathbb{C}}\dd[2]{\beta}
    \;W_{\tr_{-}\rho}(\beta)
    \;K\pqty{\alpha-\beta;\mathcal{R}},
\end{equation}
where $K(\alpha,\mathcal{R})$ is the convolution kernel
\begin{equation}
\begin{aligned}
    K(\alpha;\mathcal{R})
    &\coloneqq
    \int_{\mathbb{C}^{M-2}}\dd[2(M-2)]{\vec{\gamma}}
    \;\prod_{m=1}^{M-2}W_{\varrho_m}(\gamma_m)\\[-1ex]
    &\hspace{4.5em}{}\times{}2\pqty{1-M^{-1}}
    \;\delta\pqty{
        \alpha-\tfrac{\vec{\gamma}^T\vec{1}}{\sqrt{M}}
    }.
\end{aligned}
\end{equation}
Then, the smoothed Wigner function lower bounds, up to a factor, the trace distance to all non-GME states as
\begin{equation}
\begin{aligned}
    \max\Bqty{0,-\widetilde{W}_{\tr_{-}\rho}(\alpha;\mathcal{R})} \leq \frac{2}{\pi}\min_{\sigma \notin \operatorname{GME}} \| \sigma - \rho \|_1.
\end{aligned}
\end{equation}
Hence, $\exists\alpha : \widetilde{W}_{\tr_{-}\rho}(\alpha;\mathcal{R}) < 0$ implies that $\rho$ is GME.
\end{theorem}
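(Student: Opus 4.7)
The plan is to express $\widetilde{W}_{\tr_-\rho}(\alpha;\mathcal{R})$ as $\tr[\rho\,\Omega_\alpha]$ for an explicit Hermitian witness operator $\Omega_\alpha$ with $\|\Omega_\alpha\|_\infty\leq 2/\pi$, to prove that $\tr[\sigma\,\Omega_\alpha]\geq 0$ on every non-GME $\sigma$, and then to invoke Hölder's inequality. The crucial move is to recognize $\widetilde{W}$ as a rescaled single-mode Wigner function of an ancilla-dilated state.

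First, I would unfold the reduced Wigner function as $W_{\tr_-\rho}(\beta) = \int d^{2M}\vec{\alpha}'\;W_\rho(\vec{\alpha}')\,\delta^{(2)}\!\bigl(\sum_m \alpha_m'/\sqrt{M} - \beta\bigr)$, substitute it into the definition of $\widetilde{W}$, and collapse the $\beta$-integral against the delta inside $K$. The resulting constraint $\sum_m \alpha_m' + \sum_m\gamma_m = \alpha\sqrt{M}$ matches the marginal one obtains by applying to $\rho\otimes\varrho_1\otimes\cdots\otimes\varrho_{M-2}$ the $(2M-2)$-port beamsplitter $\mathcal{B}$ whose first output is $a_+^{\mathrm{new}} = (\sum_m a_m + \sum_m b_m)/\sqrt{2M-2}$; upon matching the delta-function normalizations, the prefactor $2(1-1/M)$ in $K$ cancels against the Jacobian and one obtains
\begin{equation*}
\widetilde{W}_{\tr_-\rho}(\alpha;\mathcal{R}) = W_\tau\!\bigl(\sqrt{M/(2M-2)}\,\alpha\bigr),
\end{equation*}
where $\tau = \tr_\perp\mathcal{B}(\rho\otimes\varrho_1\otimes\cdots\otimes\varrho_{M-2})\mathcal{B}^\dag$ is a bona fide single-mode density operator. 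The universal bound $|W_\tau|\leq 2/\pi$ then immediately gives $\|\Omega_\alpha\|_\infty\leq 2/\pi$.

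For positivity on biseparable states, I would invoke the convolution identity $(W_\mu * W_\nu)(\beta) = (2/\pi)\,\tr[\mu\,D(\beta)\Pi\,\nu\,\Pi\,D^\dag(\beta)]\geq 0$, valid for any two single-mode states $\mu,\nu$ since it is the Hilbert--Schmidt overlap of two positive operators. The essential combinatorial observation is that $M-2$ ancillae are \emph{exactly} enough to turn every bipartition into a balanced 50:50 split: for $\sigma = \sigma_\mathcal{A}\otimes\sigma_{\bar{\mathcal{A}}}$ with $|\mathcal{A}|=k\in\{1,\ldots,M-1\}$, assign $M-1-k$ of the $\varrho_m$ to the $\mathcal{A}$-side and the remaining $k-1$ to the $\bar{\mathcal{A}}$-side, so that both groups contain exactly $M-1$ modes and contribute with equal uniform weights $1/\sqrt{2M-2}$ in $a_+^{\mathrm{new}}$. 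One may then factor $\mathcal{B}$ through two within-group beamsplitters followed by a 50:50 beamsplitter mixing the two resulting collective modes; because the pre-$\mathcal{B}$ state is a product across the two groups, the within-group stage yields a product $\tau_\mathcal{A}'\otimes\tau_{\bar{\mathcal{A}}}'$ of two single-mode states. The 50:50 reduction formula then gives $W_\tau(\beta) = 2\,(W_{\tau_\mathcal{A}'} * W_{\tau_{\bar{\mathcal{A}}}'})(\sqrt{2}\,\beta)\geq 0$, hence $\widetilde{W}_{\tr_-\sigma}(\alpha;\mathcal{R})\geq 0$ for all biseparable $\sigma$ and, by linearity, for any non-GME state. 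Hölder's inequality then gives $-\widetilde{W}_{\tr_-\rho}(\alpha)\leq \tr[(\sigma-\rho)\Omega_\alpha]\leq (2/\pi)\,\|\rho-\sigma\|_1$ for any non-GME $\sigma$, and minimizing over $\sigma$ together with taking $\max\{0,\cdot\}$ on the left closes the argument.

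The main obstacle is the balanced-regrouping step. A naive grouping — for instance, lumping all ancillae on one side — produces an unbalanced intermediate beamsplitter, and then the identity $W_\tau\propto W_A * W_B$ breaks: one of the Wigner factors would be composed with a nonunitary affine rescaling of its argument, which is no longer the Wigner function of any state, so non-negativity is not automatic. The elegance of the theorem is that the precise count $M-2$ of ancillae is exactly what permits a 50:50 regrouping for every bipartition simultaneously, so that the witness remains non-negative on every biseparable state irrespective of the chosen ancilla states $\mathcal{R}$.
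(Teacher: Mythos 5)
Your proposal is correct and follows essentially the same route that the paper attributes to its companion Letter: the identification of $\widetilde{W}_{\tr_{-}\rho}(\alpha;\mathcal{R})$ with the centre-of-mass Wigner function of the $(2M-2)$-mode ancilla-dilated state $\rho\otimes\bigotimes_{m=1}^{M-2}\varrho_m$ (which is exactly the identity $\widetilde{W}_{\tr_{-}\rho}(0;\mathcal{R})=(2/\pi)\tr[\Pi_{+(2M-2)}(\rho\otimes\bigotimes_m\varrho_m)]$ quoted in \cref{apd:tripartite-N00N} and in the derivation of Witness~\ref{wit:BS-A}), followed by the balanced regrouping of the $M-2$ ancillae into $M-1-k$ and $k-1$ for a bipartition of size $k$, the 50:50 convolution positivity, and H\"older's inequality. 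The only blemish is the inessential constant in your auxiliary identity, which should read $(W_\mu\ast W_\nu)(\beta)=\pi^{-1}\tr[\mu\,D(\beta)\Pi\nu\Pi D^\dag(\beta)]$ rather than carrying a factor $2/\pi$; this does not affect the non-negativity conclusion or any other step.
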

Here, $\| A \|_1$ is the trace norm, or sum of the singular values, of $A$. Choosing $\mathcal{R}=\mathcal{R}_G$ to be Gaussian states gives a simple expression for the kernel function
\begin{equation}\label{eq:filter-function-Gaussian}
\begin{aligned}
    K(\alpha;\mathcal{R}_G) &= \frac{1-M^{-1}}{\pi\sqrt{\det\Sigma'}} e^{-\frac{1}{2}\abs{{\Sigma'}^{-\frac{1}{2}}\spmqty{
        \Re[\alpha-\alpha']\\
        \Im[\alpha-\alpha']
    }}^2} \\
    & \qquad\qquad{}\text{such that }{}\sqrt{\det\Sigma'} \geq \frac{M-2}{4M},
\end{aligned}
\end{equation}
where $\alpha' \in \mathbb{C}$ and $\Sigma' = \Sigma^{\prime T} \in \mathbbm{R}^{2\times2}$ are, respectively, related to the means and covariances of $\mathcal{R}_G$.
This implies the following corollary as a special case of \cref{thm:GMN-reduced-state}, which relates to another quantifier of nonclassicality previously defined in the literature.
\begin{corollary}[From Ref.~\cite{CV-GME-letter}: Enough nonclassicality depth of the center of mass implies GME]\label{col:nonclassicality-depth-reduced-state}
    The nonclassicality depth $\tau_c$ of a state $\rho$ is defined as \cite{nonclassicality-depth}
    \begin{equation}
        \tau_c(\rho) \coloneqq \min\!\Bqty{\tau : \forall \alpha : \frac{1}{\pi\tau}\int_{\mathbb{C}}\dd[2]{\beta}
        P_\rho(\beta)e^{-\frac{\abs{\alpha-\beta}^2}{\tau}} \geq 0 },
    \end{equation}
    where $P_\rho(\alpha)$ is the Glauber $P$ function of $\rho$ such that $\rho = \int_{\mathbb{C}}\dd[2]{\alpha} P_{\rho}(\alpha)\ketbra{\alpha}$, and $\ket{\alpha}$ is the coherent state.
    Then, $\tau_c(\tr_{-}\rho) > 1 - M^{-1}$
    implies that $\rho$ is GME.
\end{corollary}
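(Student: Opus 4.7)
The plan is to obtain the corollary as an application of \cref{thm:GMN-reduced-state} by choosing the $M-2$ auxiliary states $\mathcal{R}=\mathcal{R}_G$ to be Gaussian, so that the smoothing kernel $K$ reduces to the Gaussian form \cref{eq:filter-function-Gaussian}. The key observation is that a Gaussian smoothing of the Wigner function, via the convolutional relation between the Wigner and Glauber $P$ functions, is itself a Gaussian smoothing of the $P$ function with a larger width. Matching this total width to the bound on $\sqrt{\det\Sigma}$ will give exactly the threshold $\tau = 1-M^{-1}$ appearing in the nonclassicality depth.

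First I would record the standard identity $W_\sigma = P_\sigma * R_{1/2}$, where $R_\tau(\alpha) \coloneqq (\pi\tau)^{-1}e^{-|\alpha|^2/\tau}$ is the isotropic Gaussian kernel of variance $\tau$ used in the definition of $\tau_c$, and where the convolution law $R_{\tau_1} * R_{\tau_2} = R_{\tau_1+\tau_2}$ holds. Next I would choose the Gaussian states in $\mathcal{R}_G$ such that the resulting kernel saturates the bound in \cref{eq:filter-function-Gaussian}, i.e.\ $\Sigma = \sigma^2 \mathbbm{1}$ with $\sigma^2 = (M-2)/(4M)$, and such that the mean $\alpha_0$ is an arbitrary free parameter I can later optimize over. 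A direct comparison of the exponent $\tfrac12 |\Sigma^{-1/2}(\Re\alpha,\Im\alpha)^T|^2 = |\alpha|^2/(2\sigma^2)$ with $|\alpha|^2/(\tau-\tfrac12)$ yields the identification $\tau - \tfrac12 = 2\sigma^2 = (M-2)/(2M)$, i.e.\ $\tau = 1-M^{-1}$. Collecting the prefactors gives $K(\alpha-\alpha_0;\mathcal{R}_G) = 2(1-M^{-1})\, R_{\tau-1/2}(\alpha-\alpha_0)$.

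Then I would combine these ingredients to rewrite the smoothed Wigner function of the center-of-mass mode as
\begin{equation}
\widetilde{W}_{\tr_{-}\rho}(\alpha;\mathcal{R}_G)
= 2(1-M^{-1})\,\bigl(P_{\tr_{-}\rho} * R_{1/2} * R_{\tau-1/2}\bigr)(\alpha-\alpha_0)
= 2(1-M^{-1})\,\bigl(P_{\tr_{-}\rho} * R_{\tau}\bigr)(\alpha-\alpha_0),
\end{equation}
so that, up to the positive prefactor $2(1-M^{-1})$, the smoothed Wigner function coincides with the $\tau$-smoothed $P$ function evaluated at $\alpha-\alpha_0$, exactly the object appearing in the definition of $\tau_c$. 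Finally, if $\tau_c(\tr_{-}\rho) > 1-M^{-1} = \tau$, then by minimality in the definition of $\tau_c$ the convolution $(P_{\tr_{-}\rho} * R_{\tau})$ must be negative at some point, which I can shift to any desired $\alpha$ by the free choice of $\alpha_0$. Hence $\widetilde{W}_{\tr_{-}\rho}(\alpha;\mathcal{R}_G) < 0$ there, and \cref{thm:GMN-reduced-state} concludes that $\rho$ is GME.

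The computation is mostly bookkeeping, and I expect the only subtle step to be the careful matching of conventions: the exponent normalization of $K$ in \cref{eq:filter-function-Gaussian} against the normalization of $R_\tau$ used in the definition of $\tau_c$, together with the variance $1/2$ of the $P\to W$ kernel. Once these conventions align so that $\sqrt{\det\Sigma} = (M-2)/(4M)$ translates to the total smoothing width $\tau = 1-M^{-1}$, the corollary follows immediately.
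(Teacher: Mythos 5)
Your proposal is correct and follows essentially the same route the paper intends: it presents the corollary as the special case of \cref{thm:GMN-reduced-state} obtained by taking Gaussian auxiliary states so that the kernel becomes \cref{eq:filter-function-Gaussian}, and then identifying the resulting convolution $W_{\tr_-\rho} * K$ with the $\tau$-smoothed $P$ function, with the width bookkeeping $\tau = \tfrac12 + 2\sigma^2 = 1 - M^{-1}$ exactly as you carried out. The only implicit step worth stating is the monotonicity that makes the set in the definition of $\tau_c$ an interval (further Gaussian smoothing preserves nonnegativity), which justifies passing from $\tau < \tau_c(\tr_-\rho)$ to the existence of a negative value of $(P_{\tr_-\rho} * R_\tau)$; your appeal to minimality covers this.
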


\subsection{Relation to Gaussian operations}
In this section, we report some relationships between Gaussian operations and phase-space quantities.
Since our witnesses depend on measurements of Wigner or characteristic functions, these results will be useful for constructing multipartite entanglement witnesses in terms of (controlled-)Gaussian operations on the CV degrees of freedom.

The first useful result, proven in Appendix~\ref{apd:bochner-extended}, connects the expectation value of displacement operators to that of the displaced parity.
\begin{proposition}[Bochner's theorem and bounds of absolute volume of displaced parity]\label{prop:bochner-extended}
    For any self-adjoint trace-class operator $R = R^\dag$ such that $\left\|R\right\|_1 < \infty$, consider a collection of phase-space vectors $\Xi = \{\vec{\xi}_n\}_{n=1}^N$, and the matrix $\mathbf{C}(R;\Xi) \in \mathbb{C}^{N \times N}$ constructed as
    \begin{equation}
        [\mathbf{C}(R;\Xi)]_{n,n'} \coloneqq \frac{1}{N}\tr[R D(\vec{\xi}_n-\vec{\xi}_{n'})].
    \end{equation}
    Then, the trace norm of $\mathbf{C}(R;\Xi)$ lower bounds the absolute integral of the expectation value of the displaced parity as
    \begin{equation}
        \left\|\mathbf{C}(R;\Xi)\right\|_1 \leq \pqty{\frac{2}{\pi}}^M\int_{\mathbb{C}^M}\dd[2M]{\vec{\alpha}} \abs{\tr[ R \Pi(\vec{\alpha})]}.
    \end{equation}
\end{proposition}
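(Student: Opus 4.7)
The plan is to recognise that $\mathbf{C}(R;\Xi)$ is, up to a positive scalar kernel, an average of rank-one positive semidefinite matrices against the function $f(\vec{\alpha})\coloneqq\tr[R\Pi(\vec{\alpha})]$, so that the trace norm can be pulled inside the integral via the triangle inequality. The connection between the characteristic function $\tr[RD(\vec{\xi})]$ and the displaced-parity expectation is the same Fourier relationship used to define the Wigner function in \cref{sec:phase-space-QI}: writing $W_R(\vec{\alpha}) = (2/\pi)^M f(\vec{\alpha})$ and inverting the Fourier transform gives
\begin{equation*}
    \tr[R D(\vec{\xi})] = \pqty{\frac{2}{\pi}}^{M}\int_{\mathbb{C}^M}\dd[2M]{\vec{\alpha}}\, f(\vec{\alpha})\,e^{\vec{\alpha}\wedge\vec{\xi}}.
\end{equation*}

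Substituting $\vec{\xi}\to\vec{\xi}_n-\vec{\xi}_{n'}$ and using that the wedge product is purely imaginary, so $e^{-\vec{\alpha}\wedge\vec{\xi}_{n'}}=\overline{e^{\vec{\alpha}\wedge\vec{\xi}_{n'}}}$, the matrix elements become
\begin{equation*}
    [\mathbf{C}(R;\Xi)]_{n,n'} = \pqty{\frac{2}{\pi}}^{M}\int_{\mathbb{C}^M}\dd[2M]{\vec{\alpha}}\, f(\vec{\alpha})\,[\vec{v}(\vec{\alpha})\vec{v}^{\dag}(\vec{\alpha})]_{n,n'},
\end{equation*}
where $\vec{v}(\vec{\alpha})\in\mathbb{C}^N$ has components $[\vec{v}(\vec{\alpha})]_n = N^{-1/2}e^{\vec{\alpha}\wedge\vec{\xi}_n}$. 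Thus $\mathbf{C}(R;\Xi)$ is a real-weighted integral of the rank-one positive semidefinite matrices $\vec{v}(\vec{\alpha})\vec{v}^{\dag}(\vec{\alpha})$, with weight $f(\vec{\alpha})\in\mathbb{R}$ (real because $R$ is self-adjoint and the displaced parity is too).

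Next I apply the triangle inequality for the trace norm, which is valid here because $R$ being trace-class ensures $f\in L^1$ and the swap of $\|\cdot\|_1$ with the integral is legitimate. This yields
\begin{equation*}
    \|\mathbf{C}(R;\Xi)\|_1 \leq \pqty{\frac{2}{\pi}}^{M}\int_{\mathbb{C}^M}\dd[2M]{\vec{\alpha}}\, |f(\vec{\alpha})|\,\|\vec{v}(\vec{\alpha})\vec{v}^{\dag}(\vec{\alpha})\|_1.
\end{equation*}
Because $\vec{v}(\vec{\alpha})\vec{v}^{\dag}(\vec{\alpha})$ is rank-one and positive semidefinite, its trace norm equals its trace, which is $\|\vec{v}(\vec{\alpha})\|^2 = N^{-1}\sum_{n=1}^N |e^{\vec{\alpha}\wedge\vec{\xi}_n}|^2 = 1$ since each exponential has unit modulus. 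This collapses the bound to exactly the claimed statement.

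The only delicate step is the justification that the trace norm commutes with the integral in the triangle-inequality step, which is the usual $L^1$-Bochner-integral argument and follows from $\|R\|_1<\infty$ together with the boundedness of $\|\vec{v}(\vec{\alpha})\vec{v}^{\dag}(\vec{\alpha})\|_1$. Everything else is a direct rewriting via the Fourier relation between the characteristic and Wigner-like functions, so I expect no substantive obstacle beyond cleanly presenting the rank-one-integral-representation viewpoint, which is what makes the whole argument transparent as a multidimensional Bochner-type statement.
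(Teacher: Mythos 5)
Your argument is essentially the paper's: both rest on the Fourier relation between $\tr[RD(\vec{\xi}_n-\vec{\xi}_{n'})]$ and $f(\vec{\alpha})=\tr[R\Pi(\vec{\alpha})]$, which exhibits $\mathbf{C}$ as an integral of the rank-one matrices $\vec{v}(\vec{\alpha})\vec{v}^{\dag}(\vec{\alpha})$ built from unit vectors, weighted by the real function $f$. The paper reaches the final bound through the dual characterization of the trace norm, writing $\|\mathbf{C}\|_1=\tr[\mathbf{C}(P_+-P_-)]$ with $P_\pm$ the spectral projectors and using $0\le\vec{v}^{\dag}P_\pm\vec{v}\le 1$; your triangle-inequality route with $\|\vec{v}\vec{v}^{\dag}\|_1=\|\vec{v}\|^2=1$ is the primal version of the same estimate and is, if anything, cleaner.

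One correction is needed: $\|R\|_1<\infty$ does \emph{not} imply $f\in L^1(\mathbb{C}^M)$. Trace-class only guarantees that the Wigner function is bounded and square-integrable; its absolute integral can diverge, which is precisely why the "absolute Wigner volume" is a nontrivial quantity. The paper therefore splits into cases: when $\int\abs{f}$ diverges the claimed inequality is vacuously true because the right-hand side is infinite, and only when it is finite does it invoke the Fourier inversion formula (justified there via the Jordan decomposition $R=R_+-R_-$ and Fubini). With that case split in place, both your Fourier inversion step and the interchange of $\|\cdot\|_1$ with the integral become legitimate, and your proof goes through.
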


Another useful relation, proven in Appendix~\ref{apd:center-of-mass-parity-is-beam splitter}, is the connection between the parity on collective modes (like the center of mass or relative modes) and multiport beam splitters.
\begin{proposition}[Parity operations on collective modes as multiport beam splitters]\label{prop:center-of-mass-parity-is-beam splitter}
    The parity operators $\Pi_{\pm M}$, where $\Pi_{+M} \coloneqq e^{i\pi a_+^\dag a_+}$ for $\sqrt{M}a_+ = \sum_{m=1}^M a_m$ and $\Pi_{-M} \coloneqq e^{i\pi (\vec{a}^\dag\vec{a} - a_+^\dag a_+)} = \Pi\,\Pi_{+M}$, are multiport beam splitters that act on the local modes as
    \begin{equation}\label{eq:VpmMDef}
        \Pi_{\pm M} a_m \Pi_{\pm M} = \pm \pqty{1-\frac{2}{M}}a_m \mp \frac{2}{M}\sum_{m'\neq m} a_{m'}.
    \end{equation}
\end{proposition}
In the following sections, we will use the above results to formulate concrete schemes that experimentally implement our phase-space GME witnesses.

\section{Implementation of GME Witnesses}
\begin{table*}[t!]
\renewcommand{\arraystretch}{3}
\setlength{\tabcolsep}{12pt}
\hspace*{-2.5em}\begin{tabular}{ r c c c c }
\cline{2-5}
& \textbf{Based on} & \textbf{Required $cU$} & \begin{tabular}{@{}c@{}}\\[-10ex]\textbf{Additional}\\[-5ex]\textbf{requirements}\end{tabular} & \textbf{GME condition} \\[1ex]
\cline{2-5}
\ref{wit:P}
    & Corollary~\ref{col:GME-Wigner-witness}
    & Parity
    & Discretization
    & $\begin{array}{c}
        \int_{\omega}\dd[2]{\alpha}\abs{
            \tr[\rho\, \Pi(\alpha\vec{y}+\alpha^*\vec{z})]
        } > \frac{\pi}{4\sqrt{M-1}} \\[-4ex]
        \text{where $\omega\subseteq\mathbb{C}$ is Lebesgue-measurable}
        \end{array}$\\
\ref{wit:D-BS}
    & Corollary~\ref{col:GME-Wigner-witness}
    & \begin{tabular}{@{}c@{}}\\[-10ex]\text{Displacement \&}\\[-5ex]\text{beam splitter}\end{tabular}
    & ---
    & $\begin{array}{c}
        \left\|\mathbf{C}\right\|_1 > \frac{M}{2\sqrt{M-1}}\text{ where }\mathbf{C} \in \mathbb{C}^{N \times N},\\[-3.5ex]
    [\mathbf{C}]_{n,n'} = \frac{1}{N}\tr[
        \rho \,
        \Pi_{-M} \,
        D\pqty\big{(\xi_n-\xi_{n'})\vec{1}}
    ]
    \end{array}$ \\
\ref{wit:BS-A}
    & Theorem~\ref{thm:GMN-reduced-state}
    & Beam splitter
    & Ancilla modes
    & $ \tr[\pqty{\rho \otimes \bigotimes_{m=1}^{M-2}\varrho_{m}} \Pi_{+(2M-2)}] < 0 $ \\
\ref{wit:BS-Rand}
    & Theorem~\ref{thm:GMN-reduced-state}
    & Beam splitter
    & \begin{tabular}{@{}c@{}}\\[-10ex]\text{Random}\\[-5ex]\text{displacements}\end{tabular}
    & $\begin{array}{c}
        \ev{\tr[
            D(\hat{\beta}\vec{1})
            \rho
            D^\dag(\hat{\beta}\vec{1}) \,
            \Pi_{+M}
        ]} < 0,\\[-3.5ex]
    \text{where }\spmqty{
            \Re[\hat{\beta}] \\
            \Im[\hat{\beta}]
        } \sim \mathcal{N}\pqty{
            -\frac{1}{\sqrt{M}}\spmqty{
                \Re[\alpha] \\
                \Im[\alpha]
            },
            \Sigma
        },\\[-3ex]
        \sqrt{\det\Sigma} \geq \frac{M-2}{4M^2}
    \end{array}$ \\
\ref{wit:D}
    & Theorem~\ref{thm:GMN-reduced-state}
    & Displacement
    & ---
    & $\begin{array}{c}
    \left\|\mathbf{C}\circ\mathbf{K}\right\|_1 > 1\;\text{where}\;
    \mathbf{C},\mathbf{K} \in \mathbb{C}^{N \times N},\\[-3.5ex]
    [\mathbf{C}]_{n,n'} = \frac{1}{N}\tr[
        \rho \,
        D\pqty\big{
            (\xi_n-\xi_{n'})\vec{y} +
            (\xi_n^*-\xi_{n'}^*)\vec{z}
        }
    ],\\[-3.5ex]
    [\mathbf{K}]_{n,n'} = \prod_{m=1}^{M-2} \chi_{\varrho_m}(\xi_n-\xi_{n'}) 
    \end{array}$  \\[1ex]
\cline{2-5}
\end{tabular}
\caption{
    \label{table:ImplementationSummary}
    A summary of implementation of our GME witnesses.
    Here $\vec{y},\vec{z}\in\mathbb{C}^M$ such that $\vec{y}\circ\vec{y}^* - \vec{z}\circ\vec{z}^* = \vec{1}$, $\{\xi_{n}\}_{n=1}^N$ is any choice of $N$ phase-space points, and $\mathcal{N}(\vec{\mu},\Sigma)$ is the multivariate normal distribution with mean $\vec{\mu}$ and covariance matrix $\Sigma$.
}
\end{table*}

Let us now demonstrate that our GME witnesses can be implemented in state-of-the-art experimental platforms through convenient quantum circuits.
The types of measurements we will consider are
\begin{enumerate}
    \item Witness in Sec.~\ref{wit:P}: Wigner function measurements $W_\rho(\vec{\alpha})$, performed locally;
    \item Witnesses in Sec.~\ref{wit:D-BS}~and~\ref{wit:D}: characteristic function measurements $\chi_\rho(\vec{\xi})$, performed locally;
    \item Witnesses in Sec.~\ref{wit:D-BS},~\ref{wit:BS-A},~and~\ref{wit:BS-Rand}: Wigner function measurements of center of mass $W_{\tr_{-}\rho}(\alpha)$ or relative modes, performed collectively.
\end{enumerate}

Local Wigner and characteristic function measurements are routinely performed via controlled displaced-parity and controlled displacement operations \cite{cQED-parity-1,cQED-parity-2,cQED-CD-1,cQED-CD-2,cQED-CD-3,cQAD}.

Meanwhile, measurements of the collective Wigner function of the center-of-mass or relative modes can be implemented by either directly coupling the qubit to a collective mode \cite{trapped-ion-center-of-mass}, or by implementing $\Pi_{\pm M}$ from \cref{eq:VpmMDef} as a multimode beam splitter.
We will use the latter implementation in the rest of the work due to its wider availability and familiarity: Qubit-controlled beam-splitter operations between two modes have been experimentally implemented in different qubit-CV architectures \cite{conditional-beam-splitter-1,conditional-beam-splitter-2}, while qubit-controlled operations of arbitrary multimode beam splitters can be implemented using a sequence of two-mode beam splitters \cite{multiport-beam-splitter-decomposition-1,multiport-beam-splitter-decomposition-2}.

A summary of all implementations of our witnesses is given in \cref{table:ImplementationSummary}.
Importantly, all of our witnesses are significantly less demanding than full state tomography: They only require measurements either over a finite region, or at a finite number of points, in phase space.
Most remarkably, with the help of auxiliary modes, a Wigner function measurement at a single point in phase space is sufficient to certify GME.

\subsection{\label{wit:P}With controlled parity}

A straightforward implementation of \cref{col:GME-Wigner-witness} involves measuring the controlled displaced-parity operator $\ketbra{\uparrow}\otimes\mathbbm{1} + \ketbra{\downarrow}\otimes \Pi_m(\alpha_m)$ on each mode for a chosen two-dimensional slice in phase space.
A circuit of the required measurement is given in \cref{fig:wit:P}.
There, we use the convention from \cref{fig:cU-circuit} where only $\Pi(\alpha)$ measurements on each mode are illustrated, which is actually implemented through a controlled displaced-parity measurement followed by a qubit readout.

With this measurement scheme, we can implement the GME witness with the following steps.
\begin{enumerate}
    \item Choose coefficients $\vec{y},\vec{z} \in \mathbb{C}^M : \vec{y}\circ\vec{y}^* - \vec{z}\circ\vec{z}^* = \vec{1}$
    \item Using the circuit in \cref{fig:wit:P}, measure $\langle \Pi(\alpha\vec{y}+\alpha^*\vec{z}) \rangle$ over a finite, Lebesgue-measurable two-dimensional slice $\alpha \in \omega \subseteq\mathbb{C}$ of phase space
    \item Then, integrate the absolute values of the measurement outcomes over this slice. If
    \begin{equation}\label{eq:wit:P}
        \int_{\omega}\dd[2]{\alpha} \abs{\ev{\Pi(\alpha\vec{y}+\alpha^*\vec{z})}} > \frac{\pi}{4\sqrt{M-1}},
    \end{equation}
    then the GME of the input state is certified
\end{enumerate}
The last step is simply a direct implementation of \cref{col:GME-Wigner-witness} with $\Pi(\vec{\alpha}) = (\pi/2)^{M}W_\rho(\vec{\alpha})$.

In principle, this only requires an integral over a finite region on a two-dimensional slice, which is much smaller than the $2M$ dimensions of the entirety of phase space, and thus already an exponential improvement over tomography.
In practice, \cref{eq:wit:P} is computed by discretizing and performing numerical integration over a finite number of representative points.
The effects of discretization, including some rigorous and heuristic approaches to error analysis, will be discussed in \cref{sec:P-finite}.

\begin{figure}
    \centering
    \includegraphics{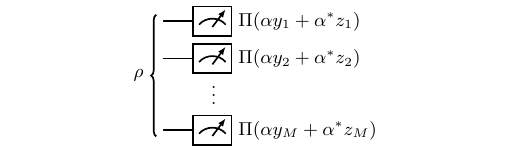}
    \caption{\label{fig:wit:P}
        Witness in Sec.~\ref{wit:P}: Implementation of \cref{col:GME-Wigner-witness} through local parity measurements on each mode.
        The measurements are performed for values of $\alpha$ that cover a finite region $\alpha \in \omega\subseteq\mathbb{C}$.
    }
\end{figure}

\subsection{\label{wit:D-BS}With controlled displacement and beam splitters}
Another implementation of \cref{col:GME-Wigner-witness} requires the measurement scheme as illustrated in \cref{fig:wit:D-BS}.
Here we first apply the controlled multiport beam splitter $\ketbra{\uparrow}\otimes\mathbbm{1} + \ketbra{\downarrow}\otimes \Pi_{-M}$ as defined in \cref{eq:VpmMDef} to all modes of $\rho$ and then measure $\sigma_x$ on the qubit and perform displacement measurements on each mode.
This circuit implements the measurement
\begin{equation}
\begin{aligned}
    \ev{\sigma_x \otimes  D(\xi\vec{1})}
    &= \tr[\rho \; D_{+}(\sqrt{M}\xi) \; \Pi_{-M}],
\end{aligned}
\end{equation}
where $D_+(\xi_+) \coloneqq e^{\xi_+ \wedge a_+}$ is the displacement operator on the center of mass $\sqrt{M} a_+ = \sum_{m=1}^M a_m$.
Thus, this is a hybrid phase-space measurement, where we measure the characteristic function on the center of mass, but the Wigner function on the relative modes.

\begin{figure}
    \centering
    \includegraphics{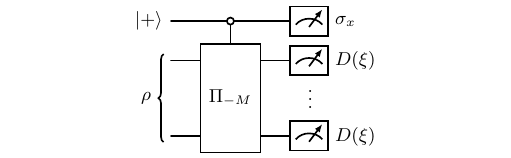}
    \caption{
        \label{fig:wit:D-BS}
        Witness in Sec.~\ref{wit:D-BS}: Implementation of \cref{col:GME-Wigner-witness} using a controlled beam splitter and local displacement measurements on each mode.
        The measurements are performed at a finite number of points $\xi \in \{\xi_n-\xi_{n'}\}_{n \leq n'}$.
    }
\end{figure}

Using this measurement scheme, the GME witness can be implemented as follows.
\begin{enumerate}
    \item Choose a set of phase-space points $\Xi = \{\xi_n\}_{n=1}^N$
    \item Measure $\langle \sigma_x\otimes D((\xi_n-\xi_{n'})\vec{1})\rangle$ using the circuit in \cref{fig:wit:D-BS} at $N(N+1)/2$ pairwise differences $\{\xi_n-\xi_{n'}:1 \leq n \leq n' \leq N\}$
    \item Collect the measurement outcomes into a finite matrix $\mathbf{C}(\rho;\Xi) \in \mathbb{C}^{N \times N}$ with entries $[\mathbf{C}(\rho;\Xi)]_{n,n'} = \frac{1}{N}\langle{\sigma_x \otimes  \bigotimes_{m=1}^M D_m(\xi_n - \xi_{n'})}\rangle$ for $n \leq n'$, while $\mathbf{C} = \mathbf{C}^\dagger$ is imposed to obtain the missing terms
    \item Finally, if $\left\|\mathbf{C}(\rho;\Xi)\right\|_1 > M(2\sqrt{M-1})^{-1}$, then the input state is GME
\end{enumerate}
The final step is a consequence from
\cref{prop:bochner-extended}, where
\begin{equation}
\begin{aligned}
    \left\|\mathbf{C}(\rho;\Xi)\right\|_1 &\leq \frac{2}{\pi}\int_{\mathbb{C}}\dd[2]{\alpha}\abs\big{
        \tr[\rho\Pi_{+M}(\alpha)\Pi_{-M}]
    } \\
    &= \pqty{\frac{\pi}{2}}^{M-1}\int_{\mathbb{C}}\dd[2]{\alpha}\abs{W_\rho\pqty{\alpha\vec{1}/\sqrt{M}}} \\
    &= M\pqty{\frac{\pi}{2}}^{M-1}\int_{\mathbb{C}}\dd[2]{\alpha}\abs{W_\rho(\alpha\vec{1})},
\end{aligned}
\end{equation}
which, together with \cref{col:GME-Wigner-witness}, implies the above condition.

\subsection{\label{wit:BS-A}With controlled beam splitter and ancilla modes}
Next we turn to an implementation of \cref{thm:GMN-reduced-state} that directly measures the smoothed Wigner function $\widetilde{W}_{\tr_{-}\rho}(\alpha;\mathcal{R})$.
While its measurement overhead is minimized compared to the previous methods, as this witness requires only a single measurement setting, it introduces $M-2$ auxiliary modes as a trade-off.
The required operations are illustrated as a circuit in \cref{fig:wit:BS-A}, and are performed as follows
\begin{enumerate}
    \item Prepare the auxiliary modes in a separable state $\bigotimes_{m=1}^{M-2}\varrho_m$
    \item Subject the joint $(M+(M-2))$-mode system to the controlled multiport beam splitter $\ketbra{\uparrow}\otimes\mathbbm{1} + \ketbra{\downarrow}\otimes \Pi_{+(2 M-2)}$ as defined in \cref{eq:VpmMDef}
    \item Measure $\sigma_x$ on the qubit. If $\langle\sigma_x\rangle < 0$, then the input state is GME
\end{enumerate}

\begin{figure}
    \centering
    \includegraphics{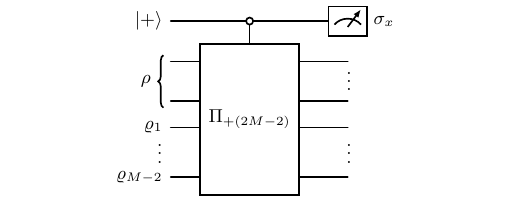}
    \caption{
        \label{fig:wit:BS-A}
        Witness in Sec.~\ref{wit:BS-A}: Implementation with a controlled beam splitter and auxiliary modes.
        Only a single measurement setting is needed.
    }
\end{figure}

The GME condition is due to
\begin{equation}
\begin{aligned}
    \ev{\sigma_x} &= \tr[
        \pqty{\rho \otimes \bigotimes_{m=1}^{M-2}\varrho_m}
        \Pi_{+(2M-2)}
    ] \\
    &= \frac{\pi}{2}\widetilde{W}_{\tr_{-}\rho}\pqty{
        \alpha = 0;
        \mathcal{R} = \{\varrho_m\}_{m=1}^{M-2}
    },
\end{aligned}
\end{equation}
where the second line comes from the proof of \cref{thm:GMN-reduced-state}.
As such, this circuit implements a direct measurement of the smoothed Wigner function from \cref{thm:GMN-reduced-state}, which is used to certify the GME of the input state.

\subsection{\label{wit:BS-Rand}With controlled beam splitter and random displacements}
Similar to the witness in Sec.~\ref{wit:BS-A}, this witness directly measures $\widetilde{W}_{\tr_{-}\rho}(\alpha;\mathcal{R})$ from \cref{thm:GMN-reduced-state}.
However, instead of utilizing auxiliary modes, we shall choose $\mathcal{R}$ to be Gaussian states and simulate the convolution kernel using randomized measurement settings.

The implementation of this witness, with the full circuit illustrated in \cref{fig:wit:BS-Rand}, proceeds as follows:
\begin{enumerate}
    \item Choose $\alpha\in\mathbb{C}$ and $\Sigma\in\mathbb{R}^2 : \Sigma \succeq 0, \sqrt{\det\Sigma} \geq \frac{M-2}{4M^2}$
    \item For each run of the experiment,
    \begin{enumerate}[i.]
    \item Sample a random $\hat{\beta}$ such that
        \begin{equation}
            \pmqty{
                \Re[\hat{\beta}] \\
                \Im[\hat{\beta}]
            } \sim \mathcal{N}\pqty{
                -\frac{1}{\sqrt{M}}\pmqty{
                    \Re[\alpha] \\
                    \Im[\alpha]
                },
                \Sigma
            },
        \end{equation}
        where $\mathcal{N}(\vec{\mu},\Sigma)$ is a normal distribution with mean $\vec{\mu}$ and covariance matrix $\Sigma$
        \item Perform the \emph{same} displacement $D(\hat{\beta})$ on every mode
        \item Perform the controlled multiport beam splitter $\ketbra{\uparrow}\otimes\mathbbm{1} + \ketbra{\downarrow}\otimes \Pi_{+M}$ as defined in \cref{eq:VpmMDef}
        \item Measure $\sigma_x$ on the control qubit
    \end{enumerate}
    \item If the average over many experimental runs satisfies $\ev{\sigma_x} < 0$, then the input state is GME
\end{enumerate}

In the last step, the average of the qubit output over the random displacements is given by
\begin{equation}\label{eq:cbsrandomdisp}
\begin{aligned}
    \ev{\sigma_x} &= \int_{\mathbb{C}}\dd[2]{\beta} \frac{
        e^{-\frac{1}{2}\abs{\Sigma^{-\frac{1}{2}}\spmqty{
            \Re[\beta+\frac{\alpha}{\sqrt{M}}]\\
            \Im[\beta+\frac{\alpha}{\sqrt{M}}]
        }}^2}
    }{
        2\pi\sqrt{\det\Sigma}
    } \\
    &\qquad\qquad\qquad\qquad{}\times{}\tr[D(\beta\vec{1})\rho D^\dag(\beta\vec{1}) \Pi_{+M}] \\
    &= \int_{\mathbb{C}}\dd[2]{\beta} \frac{
        e^{-\frac{1}{2}\abs{\Sigma^{-\frac{1}{2}}\spmqty{
            \Re[\beta+\frac{\alpha}{\sqrt{M}}]\\
            \Im[\beta+\frac{\alpha}{\sqrt{M}}]
        }}^2}
    }{
        4\sqrt{\det\Sigma}
    } W_{\tr_{-}\rho}(-\sqrt{M}\beta) \\
    &= \int_{\mathbb{C}}\frac{\dd[2]{\beta}}{M} \frac{
        e^{-\frac{1}{2}\abs{(M\Sigma)^{-\frac{1}{2}}\spmqty{
            \Re[\beta-\alpha]\\
            \Im[\beta-\alpha]
        }}^2}
    }{
        4M^{-1}\sqrt{\det(M\Sigma)}
    } W_{\tr_{-}\rho}(\beta) \\
    &= \frac{\pi}{4(1-M^{-1})}\widetilde{W}_{\tr_{-}\rho}\pqty{\alpha;\mathcal{R}_G},
\end{aligned}
\end{equation}
where we performed a change of variable $-\sqrt{M}\beta \to \beta$ in the penultimate line and identified the kernel function as the filter $K(\alpha;\mathcal{R}_G)$ from \cref{eq:filter-function-Gaussian} with $\Sigma' = M\Sigma$.
As such, \cref{thm:GMN-reduced-state} implies that $\ev{\sigma_x} < 0$ certifies the GME of the initial state $\rho$.

\begin{figure}
    \centering
    \includegraphics{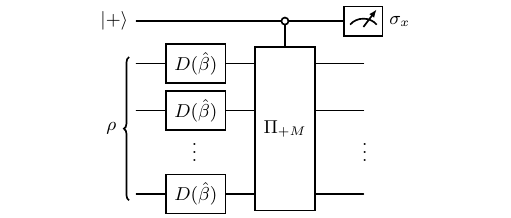}
    \caption{
        \label{fig:wit:BS-Rand}
        Witness in Sec.~\ref{wit:BS-Rand}: Implementation with a controlled beam splitter and random displacements.
        The output from the control qubit is averaged over many random displacements $\hat{\beta}$, which are sampled from a normal distribution.
    }
\end{figure}

In principle, this method can be extended to other distributions of $\hat{\beta}$ that correspond to other choices of $\mathcal{R}$ beyond Gaussian states.
Note however that $\mathcal{R}$ must be chosen such that the resulting filter function $K(\alpha;\mathcal{R})$, which relates to the required probability density of $\hat{\beta}$, is positive.
This is so that $\hat{\beta}$ can be sampled efficiently without requiring additional quantum resources.

\subsection{\label{wit:D}With controlled displacements}

\begin{figure}
    \centering
    \includegraphics{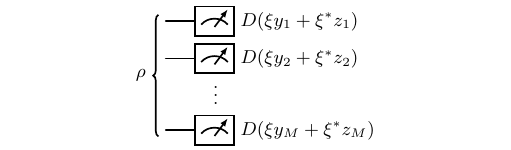}
    \caption{
        \label{fig:wit:D}
        Witness in Sec.~\ref{wit:D}: Implementation with local displacement measurements.
        The measurements are performed at a finite number of points $\xi \in \{\xi_n-\xi_{n'}\}_{n \leq n'}$.
    }
\end{figure}

Another implementation of \cref{thm:GMN-reduced-state} is based on direct measurements of the local characteristic function using the circuit illustrated in \cref{fig:wit:D}.
The witness is performed using the following steps.
\begin{enumerate}
    \item Choose coefficients $\vec{y},\vec{z} \in \mathbb{C}^M : \vec{y}\circ\vec{y}^* - \vec{z}\circ\vec{z}^* = \vec{1}$ and a set of phase-space points $\Xi = \{\xi_n\}_{n=1}^N$
    \item Measure $\langle D((\xi_n-\xi_{n'})\vec{y}+(\xi_n^*-\xi_{n'}^*)\vec{z})\rangle$ using the circuit in \cref{fig:wit:D} at $N(N+1)/2$ pairwise differences $\{\xi_n-\xi_{n'}:1 \leq n \leq n' \leq N\}$
    \item Collect the measurement outcomes into a finite matrix $\mathbf{C}(\rho;\Xi) \in \mathbb{C}^{N \times N}$ with entries $[\mathbf{C}(\rho;\Xi)]_{n,n'} = \frac{1}{N}\langle D((\xi_n-\xi_{n'})\vec{y}+(\xi_n^*-\xi_{n'}^*)\vec{z})\rangle$ for $n \leq n'$, while $\mathbf{C} = \mathbf{C}^\dagger$ is imposed to obtain the missing terms
    \item Choose $M-2$ auxiliary states $\mathcal{R} = \{\varrho_m\}_{m=1}^{M-2}$, and construct the kernel matrix $\mathbf{K}(\mathcal{R};\Xi) \in \mathbb{C}^{N\times N}$ with elements $[\mathbf{K}(\mathcal{R};\Xi)]_{n,n'}=\prod_{m=1}^{M-2} \chi_{\varrho_m}(\xi_{n}-\xi_{n'})$ by computing their characteristic functions
    \item Finally, if $\|\mathbf{C}(\rho;\Xi)\circ\mathbf{K}(\mathcal{R};\Xi)\|_1 > 1$, then the input state is GME
\end{enumerate}
Here, $\left\|\mathbf{A}\right\|_1$ denotes the trace norm, or sum of singular values, of $\mathbf{A}$.
Meanwhile, the GME condition from the last step comes from a direct application of \cref{prop:bochner-extended} on \cref{thm:GMN-reduced-state}.

Notice also that the last two steps are simply classical postprocessing procedures.
After constructing $\mathbf{C}(\rho;\Xi)$ from the measured data, the kernel $\mathbf{K}(\mathcal{R};\Xi)$ can be optimized numerically over the choice of auxiliary states $\mathcal{R} = \{\varrho_m\}_{m=1}^{M-2}$ at no extra experimental cost.

\section{States Detected by Corollary~\ref{col:GME-Wigner-witness}}
In this section, we expand upon the implementation of \cref{col:GME-Wigner-witness} through the witnesses in Secs.~\ref{wit:P}~and~\ref{wit:D-BS} by providing examples of detected GME states.
We first characterize different families of GME states that are directly revealed by the corollary with the integral performed over the whole of phase space, i.e., with $\omega = \mathbb{C}$, before going into the specifics of either witness.

Afterwards, we study how well these states are detected by the witness in Sec.~\ref{wit:P} when the integral is instead performed over a finite region of phase space $\omega \subsetneq \mathbb{C}$.
Last, we discuss how the same states can be detected with few measurement settings using the witness in Sec.~\ref{wit:D-BS}, which implements a hybrid Wigner and characteristic function measurement of the state.

\subsection{GME generated via vacuum interference}
A convenient family of states that exhibit GME are states prepared locally in one mode, then interfered with the vacuum using a balanced multiport beam splitter.
For such states, we can show the following property that is proven in Appendix~\ref{apd:center-of-mass-absolute-volume}.
\begin{proposition}\label{prop:center-of-mass-absolute-volume}
    Let $U_M$ be a multiport beam splitter such that $\forall m : \sqrt{M} U_M^\dag a_m U_M = a_1 + \sum_{m'\neq1}c_{m,m'} a_{m'}$.
    Given $\mathcal{U}_M(\rho_1) \coloneqq U_M(\rho_1\otimes\ketbra{0}^{\otimes {M-1}})U_M^\dag$, the absolute Wigner volume of $\mathcal{U}_M(\rho_1)$ along $\{\alpha\vec{1} : \alpha \in \mathbb{C} \}$ is
    \begin{equation}
        \mathcal{V}_{2D}(\mathcal{U}_M(\rho_1);\mathbb{C}) =  \frac{1}{M}\int_{\mathbb{C}}\dd[2]{\alpha} \abs{W_{\rho_1}(\alpha)}.
    \end{equation}
\end{proposition}
This gives a sufficient condition for states of the form $\mathcal{U}_M(\rho_1)$ to be detected by \cref{col:GME-Wigner-witness}.
Since it is known that the absolute Wigner volume of the Fock states $\ket{n}$ grow as $\sim \sqrt{n}$ \cite{AnatoleNegativity2004}, $\mathcal{V}_{2D}(\mathcal{U}_M(\ket{n});\mathbb{C}) \sim \sqrt{n}/M$ implies that the GME of every state $\mathcal{U}_M(\ket{n})$ with $n \gtrsim M$ can be detected by this corollary.

A notable member of this family is the $W$ state $\mathcal{U}_M(\ket{1}) = \ketW \propto \ket{10\cdots00} + \ket{01\cdots00} + \dots \ket{00\cdots01}$ encoded in the Fock basis, for which
\begin{equation}
    \mathcal{V}_{2D}(\ketW;\mathbb{C})
    = \frac{4}{M\sqrt{e}} - \frac{1}{M},
\end{equation}
and as such the GME of $\ket{W_\text{M}}$ is detected when
\begin{equation}
\begin{gathered}
\begin{aligned}
    \frac{4}{M\sqrt{e}} - \frac{1}{M} &> \frac{1}{2\sqrt{M-1}} \\
    \pqty{\frac{M}{2}-\pqty{\frac{4}{\sqrt{e}}-1}^2}^2 &> \pqty{\frac{4}{\sqrt{e}} - 1}^4 - \pqty{\frac{4}{\sqrt{e}} - 1}^2
\end{aligned} \\
\implies 1.167 < M < 6.968.
\end{gathered}
\end{equation}
Hence, the GME of all $W$ states for $M \leq 6$ modes can be detected by \cref{col:GME-Wigner-witness}. In \cref{fig:thm1-W-pure}, we plot the corresponding two-dimensional slices of the $W$ states, and the exact amount of violation $\mathcal{V}_{2D}-(2\sqrt{M-1})^{-1}$ achieved when implementing the witness for $M$ modes.

\begin{figure}
    \begin{center}
	\includegraphics[width=\columnwidth]{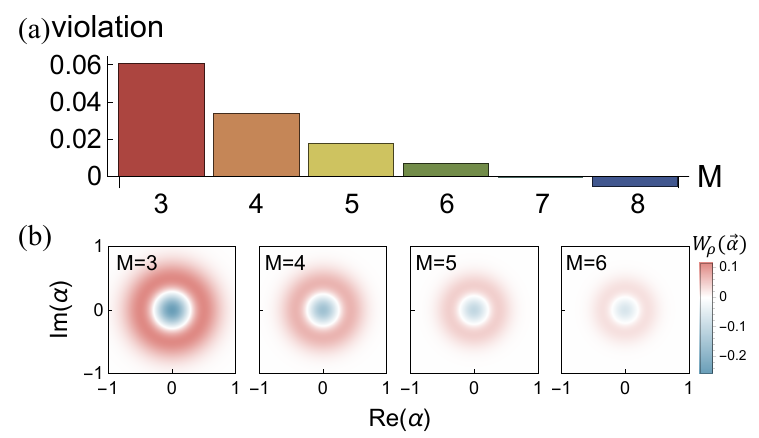}
	\end{center}
    \caption{
        \label{fig:thm1-W-pure}
        Detecting the GME of pure $M$-mode $W$ states $\ketW$ using \cref{col:GME-Wigner-witness}.
        (a) The plot of the violation $\mathcal{V}_{2D}(\ketW,\mathbb{C}) - (2\sqrt{M-1})^{-1}$ against $M$ for the two-dimensional slice $\vec{\alpha}=\alpha\vec{1}$.
        GME is detected when $\mathcal{V}_{2D} > (2\sqrt{M-1})^{-1}$.
        (b) The Wigner function $W_{\ketW}(\vec{\alpha})$ along $\vec{\alpha} = \alpha\vec{1}$ for different number of modes $M$.
    }
\end{figure}

Another member of this family of states are entangled cat states, which are $M$-mode Greenberger--Horne--Zeilinger states encoded in the coherent basis as
\begin{equation}
    \ketCat = \frac{\ket{\gamma}^{\otimes M} +
        \ket{-\gamma}^{\otimes M}}{\sqrt{2(1+e^{-2M\abs{\gamma}^2})}}
    = \mathcal{U}_M(\iketCat[1][\sqrt{M}\gamma]),
\end{equation}
where $|\gamma|$ denotes the cat size.
The Wigner function of this state reads
\begin{equation}
\begin{aligned}
    W_{\ketCat}(\vec{\alpha}) &=
    \pqty{\frac{2}{\pi}}^M
    \frac{
        e^{-2\abs{\vec{\alpha}}^2}
    }{
        1+e^{-2M\abs{\gamma}^2}
    }\Bigg[\cos(4\Im[\vec{1}^T\vec{\alpha}\gamma^*])\\
    &\qquad{}+{}e^{-2M\abs{\gamma}^2}
    \cosh(4\Re[\vec{1}^T\vec{\alpha}\gamma^*])
    \Bigg].
\end{aligned}
\end{equation}
With this, we can compute $\mathcal{V}_{2D}(\iketCat;\mathbb{C})$ along the $\vec{\alpha} = \alpha\vec{1}$ slice.
This is shown in \cref{fig:Catsize}, where we present the amount of violation $\mathcal{V}_{2D}-(2\sqrt{M-1})^{-1}$ obtained when applying the GME criterion in \cref{col:GME-Wigner-witness} to entangled $M$-mode cat states $\ketCat$.

\begin{figure}
    \begin{center}
	\includegraphics[width=\columnwidth]{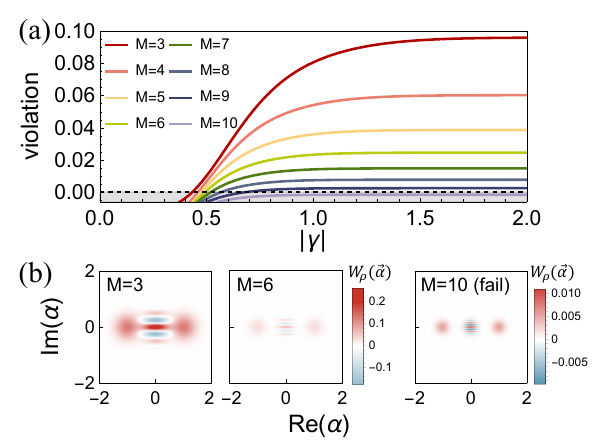}
	\end{center}
    \caption{
        \label{fig:Catsize}
        Detecting the GME of pure $M$-mode entangled cat states $\ketCat$ with \cref{col:GME-Wigner-witness}.
        (a) Amount of violation $\mathcal{V}_{2D}(\ketCat,\mathbb{C}) - (2\sqrt{M-1})^{-1}$ as a function of cat size $|\gamma|$.
        The maximum number of modes that GME can be detected is $M=9$.
        (b) The joint Wigner function $W_{\ketCat}(\vec{\alpha})$  on the two-dimensional slice for cat states of size $\gamma=1$, for $M \in \{3,6,10\}$.
    }
\end{figure}

\subsection{\label{sec:beyond-vacuum-1}GME beyond vacuum interference}
While states of the form $\mathcal{U}_M(\rho_1)$ are a convenient family to study, the generation of GME through vacuum interference is by no means necessary for GME to be detected by our witnesses.
In this section, we give some examples of the detection of GME states that do not arise from vacuum interference.

Our first example is the two-excitation Dicke state \cite{dicke-states}
\begin{equation}
\begin{aligned}
    \iket{D_2^M} &\coloneqq \sqrt{\frac{2/M}{M-1}} \sum_{m=1}^{M-1} \sum_{m'=m+1}^{M} a_m^\dag a_{m'}^\dag \ket{0}^{\otimes M} \\
    &= \sqrt{\frac{2/M}{M-1}}\bigg(
        \ket{110\cdots00} + \ket{101\cdots00} \\
        &\qquad\qquad\qquad {}+{} \cdots + \ket{000\cdots11}
    \bigg),
\end{aligned}
\end{equation}
which is a superposition of all permutations of $\ket{1}^{\otimes 2} \otimes \ket{0}^{\otimes (M-2)}$.
Its Wigner function along $\alpha\vec{1}$ is
\begin{equation}
\begin{aligned}
    &W_{\iket{D_2^{M}}}(\alpha\vec{1}) \\
    &\quad{}={}\frac{2^{M} e^{-2M\abs{\alpha}^2}}{\pi^M}\bqty{
        1 + 8 (M-1)\abs{\alpha}^2(M\abs{\alpha}^2-1)
    },
\end{aligned}
\end{equation}
from which its absolute volume is found to be
\begin{equation}
\begin{aligned}
    &\mathcal{V}_{2D}(\iket{D^M_2};\mathbb{C}) \\
    &\quad{}={}
    \pqty{\frac{\pi}{2}}^{M-1} \int_{\mathbb{C}}\dd[2]{\alpha} \abs{W_{\iket{D_2^{M}}}(\alpha\vec{1})} \\
    &\quad{}={} \frac{1}{M} - \frac{16(M-1)}{e M^2}\sinh\sqrt{\frac{M-2}{2(M-1)}} \\
    &\qquad\qquad{}+{} \frac{8\sqrt{2(M-2)(M-1)}}{e M^2}\cosh\sqrt{\frac{M-2}{2(M-1)}}.
\end{aligned}
\end{equation}

\begin{figure}
    \begin{center}
	\includegraphics[width=\columnwidth]{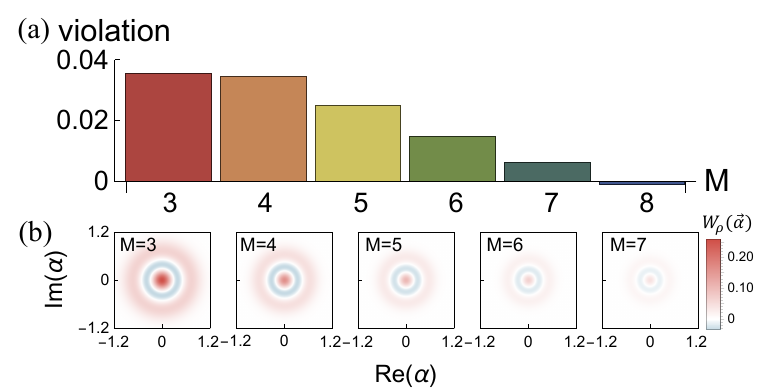}
	\end{center}
    \caption{
        \label{fig:Dicke}
        Detecting the GME of two-excitation Dicke states $\iket{D_2^M}$ using \cref{col:GME-Wigner-witness}.
        (a) The violation $\mathcal{V}_{2D}(\iket{D_2^M};\mathbb{C}) - (2\sqrt{M-1})^{-1}$ against $M$ for the two-dimensional slice $\vec{\alpha}=\alpha\vec{1}$.
        GME is detected when $\mathcal{V}_{2D} > (2\sqrt{M-1})^{-1}$.
        (b) The Wigner function $W_{\iket{D_2^M}}(\vec{\alpha})$ along $\alpha = \alpha\vec{1}$ for $3 \leq M \leq 7$.
    }
\end{figure}

The amount of violation $\mathcal{V}_{2D}-(2\sqrt{M-1})^{-1}$ is plotted in \cref{fig:Dicke}, where we find that $\mathcal{V}_{2D}>(2\sqrt{M-1})^{-1}$ for all $M \leq 7$.
Therefore, the GME of the two-excitation Dicke state is detected by \cref{col:GME-Wigner-witness} for $3 \leq M \leq 7$.

The W and Dicke states are permutation symmetric and eigenstates of the total energy, which makes the analytical forms of their absolute Wigner volume easier to work with.
However, this is purely for theoretical convenience, and not necessary for the witness.
Take the following, more complex, tripartite states as examples
\begin{equation}
\begin{aligned}
    \ket{\psi_1}&\coloneqq
        \frac{3+\sqrt{23}}{8\sqrt{2}}\ketW[3] +
        \frac{\sqrt{3}}{4}\ket{D_2^3} +
        \frac{\sqrt{23}-3}{8}\ket{011} \\
    &\qquad{}+{} \frac{\sqrt{23} - 1}{8\sqrt{2}}\ket{200} +
        \frac{1}{4\sqrt{2}}\ket{020} +
        \frac{1}{4\sqrt{2}}\ket{002} \\
    \ket{\psi_2} &\coloneqq \frac{1}{3\sqrt{2}}\ket{300} + \frac{1}{\sqrt{2}}\ket{D_2^3} -
    \frac{1}{2\sqrt{6}}\ket{1} \otimes \pqty{\ket{20}+\ket{02}}\\
    &\qquad{}+{} \frac{1}{2\sqrt{3}}\ket{200} + \frac{1}{2\sqrt{3}}\ket{020} + \frac{1}{2\sqrt{3}}\ket{002} \\
    &\qquad{}+{} \frac{1}{6\sqrt{2}}\ket{0}\otimes\pqty{
        \sqrt{3}\ket{12} + \sqrt{3}\ket{21} -
        \ket{30} - \ket{03}
    }.
\end{aligned}
\end{equation}
Here $\ket{\psi_1}$ and $\ket{\psi_2}$ are not symmetric with respect to any permutation involving the first mode and contain superpositions of states with one and two total excitations for $\ket{\psi_1}$ and two and three total excitations for $\ket{\psi_2}$.
Their Wigner functions along $\alpha\vec{1}$ take the form
\begin{equation}
\begin{aligned}
    W_{\ket{\psi_1}}(\alpha\vec{1}) &= \frac{e^{-6\abs{\alpha}^2}}{16\pi^3}\bigg(
        (16 + 3\sqrt{23})\pqty{
            12\abs{\alpha}^2 - 1
        }^2 \\
    &\qquad{}+{}
        8\sqrt{6}\Re[\alpha](16 + 3\sqrt{23})\pqty{
            6\abs{\alpha}^2 - 1
        }
        \\[-1ex]
    &\qquad{}+{} 48 - 15\sqrt{23}
    \bigg) \\
    W_{\ket{\psi_2}}(\alpha\vec{1}) &= \frac{4e^{-6\abs{\alpha}^2}}{\pi^3}
    \pqty{
        1 - 30 \abs{\alpha}^2 + 108 \abs{\alpha}^4
    }.
\end{aligned}
\end{equation}
Directly integrating these expressions, we obtain $\mathcal{V}_{2D}(\ket{\psi_1};\mathbb{C}) \gtrsim 0.445$ and $\mathcal{V}_{2D}(\ket{\psi_2};\mathbb{C}) \gtrsim 0.424$, which are both larger than $(2\sqrt{3-1})^{-1} \lesssim 0.353$.
Therefore, the GME of both states can be detected using \cref{col:GME-Wigner-witness}.

\subsection{Robustness in the presence of noise}
To study the robustness of the witnesses in experimental settings, we need to consider how noise might affect the detectability of different states.
For many qubit-CV systems, a common source of noise is the loss of energy to the environment.
This is modelled by the amplitude damping channel $\mathcal{D}_\eta(\rho)$ that mixes every system mode $a_m \to \sqrt{1-\eta}a_m + \sqrt{\eta}a^{(E)}_m$ with a corresponding environment mode $a^{(E)}_m$, where the modes $\{a_m^{(E)}\}_{m=1}^M$ are later traced out.

When applied to the $M$-mode $W$ state, we have
\begin{equation}
\rhoW \coloneqq
\mathcal{D}_\eta(\ketW)
= (1-\eta)\ketbraW + \eta \ketbra{0}^{\otimes M}  .
\end{equation}
To evaluate the GME criterion from \cref{col:GME-Wigner-witness}, we examine the joint Wigner function of the noisy $W$ state on the two-dimensional slice $\vec{\alpha} = \alpha\vec{1}$.
This is given by
\begin{equation}
    W_{\rhoW} (\alpha\vec{1}) = \pqty{\tfrac{2}{\pi}}^M e^{-2 M|\alpha|^2}\left[(1-\eta)4 M|\alpha|^2+2\eta-1\right],
\end{equation}
from which the violation $\mathcal{V}_{2D}-(2\sqrt{M-1})^{-1}$ of the biseparability bound can be directly computed.
This violation is plotted against the loss parameter $\eta$ in \cref{fig:thm1-W-noisy}.
We find that the witnesses are robust against loss for small number of modes but becomes less robust as $M$ increases.
The threshold amount of noise, above which GME cannot be detected using this criterion, is $\eta_{\text{max}}=\{ 0.3548, 0.2446, 0.1522, 0.0710 \}$ for $M=\{3,4,5,6\}$, respectively, which shrinks approximately as $\eta_{\max} < 0.65 - 0.095M$.

\begin{figure}
    \begin{center}
	\includegraphics[width=\columnwidth]{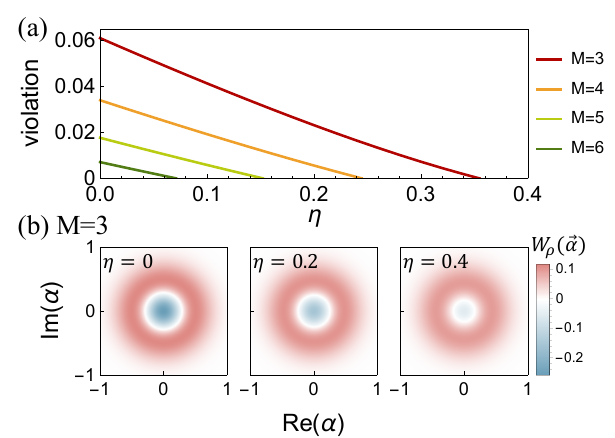}
	\end{center}
    \caption{\label{fig:thm1-W-noisy}
        Detecting the GME of $W$ states $\rhoW$ under damping loss with \cref{col:GME-Wigner-witness}.
        (a) Amount of violation $\mathcal{V}_{2D}(\rhoW;\mathbb{C}) - (2\sqrt{M-1})^{-1}$ against the loss parameter $\eta$, and (b) the joint Wigner function $W_{\rhoW[3]}(\vec{\alpha})$ along the two-dimensional slice $\vec{\alpha} = (\alpha,\alpha,\alpha)$ at different loss parameters $\eta$.
    }
\end{figure}

As another example, we apply the loss channel to the $M$-mode cat state to obtain $\mathcal{D}_\eta(\ketCat) = \rhoCat$.
The Wigner function of the resulting lossy cat state along the two-dimensional slice $\alpha\vec{1} = (\alpha,\alpha,\dots,\alpha)$ is
\begin{equation}
\begin{aligned}
&W_{\rhoCat}(\alpha\vec{1} ) = \frac{e^{-2M\abs{\alpha}^2}}{1+e^{-2M\abs{\gamma}^2}}
\pqty{\frac{2}{\pi}}^M\Bigg[ \\
&\qquad e^{-2M\abs{\gamma}^2(1-\eta)} \cosh(
    4M \sqrt{1-\eta} \Re[\alpha\gamma^*]
)\\
&\quad\qquad{}+{}  e^{-2M\abs{\gamma}^2\eta} \cos( 4M\sqrt{1-\eta} \Im[\alpha \gamma^*]
)  \Bigg].
\end{aligned}
\end{equation}
Figure~\ref{fig:CatM3GMELoss} shows the GME detection threshold for $M$-mode cat states subject to the amplitude-damping channel $\mathcal{D}_\eta$, where the maximum tolerable loss parameter $\eta_{\max}$ is plotted as a function of the cat size $|\gamma|$. It is observed that $\eta_{\text{max}}$ increases with $|\gamma|$ within the region $0 < |\gamma|< \gamma_{M,\max}$, where $\gamma_{M,\max} \in \{0.828,0.745,0.716,0.721,0.709,0.760,0.863\}$ for $M \in \{3,4,5,6,7,8,9\}$.
However, $\eta_{\text{max}}$ decreases again as the cat size grows above $|\gamma| > \gamma_{M,\max}$.
This is expected behavior, as cat states are known to be more susceptible to the effects of energy loss for larger cat sizes \cite{cat-state-decoherence}.

\begin{figure}
    \begin{center}
	\includegraphics[width=\columnwidth]{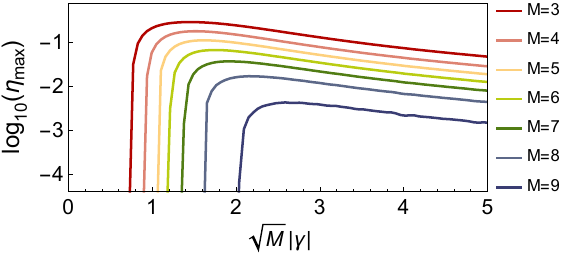}
	\end{center}
    \caption{
        The threshold loss parameter $\eta_{\max}$, plotted in log base 10, above which the witness in Sec.~\ref{wit:P} fails to detect the GME of the $M$-mode cat state, plotted against the cat size $\sqrt{M}|\gamma|$.
    }
    \label{fig:CatM3GMELoss}
\end{figure}

\subsection{\label{sec:P-finite}With numerical integration of Wigner function over finite regions}

\begin{figure}
    \begin{center}
	\includegraphics[width=\columnwidth]{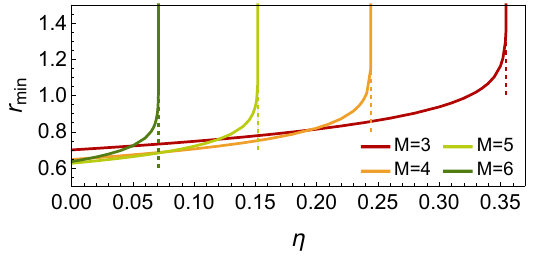}
	\end{center}
    \caption{
        \label{fig:Wrmin}
        Minimum radius $r_{\min}$ required to detect lossy $W$ states $\rho_{\text{W},\eta}$ using the witness in Sec.~\ref{wit:P}, where the absolute integral is taken over a circle $0 \leq \abs{\alpha} \leq r$.
        We find that $r_{\min}$ increases with loss parameter $\eta$, and diverges at the maximum loss parameter $\eta_{\max}$ (dashed lines), as found in the previous section.
    }
\end{figure}

In this section, we investigate the effects of two pragmatic simplifications made when implementing \cref{col:GME-Wigner-witness}.
First, the required integral would only be performed over a finite region in actual experiments, so we will quantify how large this region needs to be to detect the previous examples of GME states.
Second, the required integral will be computed numerically in practice, so error estimates of the results of numerical integration will also be analyzed.

We begin with the lossy $W$ state $\rhoW$.
Since its joint Wigner function is centrally symmetric along the two-dimensional slice $\vec{\alpha} = \alpha\vec{1}$, as seen in \cref{fig:thm1-W-noisy}, we shall consider the implementation of \cref{col:GME-Wigner-witness} using the absolute integral of $W_{\rhoW}(\alpha\vec{1})$ over the finite circular region $0 \leq \abs{\alpha} \leq r$.
We find that, for a given $M$ and $\eta$, there is a minimum radius $r_{\min}$ below which the integral no longer violates the GME bound.
This minimum radius is plotted against the amount of loss in \cref{fig:Wrmin} for different number of modes $M$, where we find that $r_{\min}$ increases with $\eta$.
We also find that $r_{\min}$ diverges at a certain loss parameter, which corresponds exactly to the maximum loss $\eta_{\max}$ that a lossy $W$ state can be detected with an integral over the entire two-dimensional slice, as found in the previous section.

We now turn to the second practical issue, which is that the required integral will be computed using numerical integration (i.e., discrete summation) in actual experimental implementations.
Thus, reasonable error estimates are needed in order to minimize the possibility of false positives due to numerical errors.
Below, two error analysis techniques---one rigorous and one heuristic---are presented as basic examples, but we refer to any numerical integration textbook for deeper discussions \cite{numerical-integration}.

Let us first present a rigorous error analysis, which requires an additional piece of information: an upper bound $E$ on the total energy $\sum_m\langle a^\dag_m a_m \rangle \leq E$ of the system.
This can be obtained by additionally measuring the energy of the system when implementing the witness.
Alternatively, an implicit bound might be present due to some property of the physical system---for example, a trapped object must certainly be less energetic than its trap, while qubit-CV systems working in the dispersive regime must satisfy certain energy conditions.

Either way, an energy bound gives rise to a rigorous bound for the integral discretized as a square grid, laid out in the following proposition.

\begin{proposition}\label{prop:rigorous-integral-error}
    Let $\{(n_R+in_I)\Delta \}_{(n_R,n_I) \in \mathcal{S} \subseteq \mathbb{Z}^2}$ be a square grid in $\mathbb{C}$ with grid spacing $\Delta$.
    Given that an upper bound $E$ on the total energy of the system $\sum_{m=1}^M{\tr}[\rho \, a^\dagger_m a_m]$ is known, define
    \begin{equation}
    \begin{aligned}
        &\phantom{\delta} I(\rho,\Delta;\mathcal{S}) \coloneqq \hspace{-1em}\sum_{(n_R,n_I)\in\mathcal{S}} \hspace{-1em} \Delta^2\abs\Big{
            \tr[\rho\, \Pi\pqty{
                (n_R+in_I)\Delta \vec{1}
            }]
        } \\
        &\delta I(\Delta;\mathcal{S},E) \\[-1ex]
        &\qquad{}\coloneqq{}\hspace{-1em}\sum_{(n_R,n_I)\in\mathcal{S}}\hspace{-1em}
            2\Delta^3\sqrt{2ME} +
            2M\Delta^4\pqty{1 + \sqrt{2(n_R^2+n_I^2)}}.
    \end{aligned}
    \end{equation}
    Then, $I(\rho,\Delta;\mathcal{S}) - \delta I(\Delta;\mathcal{S},E) > \pi(4\sqrt{M-1})^{-1}$ implies that $\rho$ is GME.
\end{proposition}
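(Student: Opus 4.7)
The plan is to interpret $I(\rho,\Delta;\mathcal{S})$ as the midpoint-rule approximation of $\int_\omega |f(\alpha)|\dd[2]{\alpha}$ where $f(\alpha)\coloneqq\tr[\rho\,\Pi(\alpha\vec{1})]$ and $\omega = \bigcup_{(n_R,n_I)\in\mathcal{S}} C_{n_R,n_I}$ is the union of the $\Delta\times\Delta$ cells $C_{n_R,n_I}$ centered at the grid points $\alpha_0 = (n_R+in_I)\Delta$. Since $\tr[\rho\,\Pi(\vec{\alpha})] = (\pi/2)^M W_\rho(\vec{\alpha})$, the condition $\int_\omega|f|\dd[2]{\alpha} > \pi/(4\sqrt{M-1})$ is equivalent to $\mathcal{V}_{2D}(\rho;\omega) > (2\sqrt{M-1})^{-1}$ along the diagonal slice (\emph{i.e.}\ with $\vec{y}=\vec{1}$, $\vec{z}=\vec{0}$), which by \cref{col:GME-Wigner-witness} certifies GME. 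The task reduces to bounding the discretization error cell-by-cell and showing that the total matches exactly $\delta I(\Delta;\mathcal{S},E)$.

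For each cell, $\bigl||f(\alpha)|-|f(\alpha_0)|\bigr|\le|f(\alpha)-f(\alpha_0)|\le|\alpha-\alpha_0|\,\max_{C_{n_R,n_I}}|\nabla f|$, and $|\alpha-\alpha_0|\le\Delta/\sqrt{2}$ on $C_{n_R,n_I}$, so the per-cell error is bounded by $(\Delta^3/\sqrt{2})\,\max_{C_{n_R,n_I}}|\nabla f|$. The main technical task is to bound $|\nabla f|$ sharply. A Wirtinger calculation using the standard identity $\partial_{\alpha_m} W_\rho(\vec{\alpha}) = -W_{[a_m^\dag,\rho]}(\vec{\alpha})$ and the commutator $[\Pi_m(\alpha),a_m^\dag] = 2(a_m^\dag-\alpha^*)\Pi_m(\alpha)$, summed over modes via the chain rule, gives
\[
\partial_\alpha f(\alpha) = -2\sqrt{M}\,\tr[\rho\,(a_+^\dag-\sqrt{M}\alpha^*)\,\Pi(\alpha\vec{1})].
\]
Applying Cauchy--Schwarz $|\tr[\rho X\Pi]|^2\le\tr[\rho XX^\dag]$ with $X = a_+^\dag-\sqrt{M}\alpha^*$, expanding $XX^\dag$, and using $|\langle a_+\rangle|=|\langle a_+^\dag\rangle|\le\sqrt{\langle a_+^\dag a_+\rangle}$ yields $|\partial_\alpha f(\alpha)|\le 2\sqrt{M}(\sqrt{\langle a_+^\dag a_+\rangle}+\sqrt{M}|\alpha|)$. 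A second Cauchy--Schwarz over modes, namely $\langle a_+^\dag a_+\rangle = M^{-1}\langle(\sum_m a_m^\dag)(\sum_m a_m)\rangle \le M^{-1}(\sum_m\sqrt{\langle a_m^\dag a_m\rangle})^2 \le E$, combined with $|\nabla f|=2|\partial_\alpha f|$ (valid since $f$ is real), gives $|\nabla f(\alpha)|\le 4\sqrt{ME}+4M|\alpha|$.

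The remainder is arithmetic: bounding $\max_{\alpha\in C_{n_R,n_I}}|\alpha|\le\Delta(\sqrt{n_R^2+n_I^2}+1/\sqrt{2})$, substituting into the per-cell error $(\Delta^3/\sqrt{2})\max|\nabla f|$, and summing over $(n_R,n_I)\in\mathcal{S}$ produces exactly $\delta I(\Delta;\mathcal{S},E) = \sum_{(n_R,n_I)\in\mathcal{S}}\bigl[2\Delta^3\sqrt{2ME}+2M\Delta^4(1+\sqrt{2(n_R^2+n_I^2)})\bigr]$. Hence $\int_\omega|f|\dd[2]{\alpha}\ge I(\rho,\Delta;\mathcal{S})-\delta I(\Delta;\mathcal{S},E) > \pi/(4\sqrt{M-1})$, and \cref{col:GME-Wigner-witness} concludes $\rho\in\operatorname{GME}$.

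The main obstacle is obtaining the derivative bound with coefficients that match the stated $\delta I$ exactly: crude applications produce constants of order $\sqrt{M(E+M)}$ or $\sqrt{M}$ that do not appear in the proposition. The key is to project everything onto the single center-of-mass mode $a_+$ --- the only collective coordinate coupled by the diagonal slice $\alpha\vec{1}$ --- so that a single Cauchy--Schwarz on $\tr[\rho X\Pi]$ already yields the correct $\sqrt{ME}$ scaling; and using $|\langle a_+\rangle|^2\le\langle a_+^\dag a_+\rangle$ rather than the anti-normal-ordered $\langle a_+ a_+^\dag\rangle = \langle a_+^\dag a_+\rangle+1$ avoids a spurious constant that would otherwise contribute an extra $2\Delta^3\sqrt{2M}$ term absent from the stated $\delta I$.
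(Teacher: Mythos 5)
Your proof is correct and follows essentially the same route as the paper's: a cell-by-cell mean-value-theorem bound, the derivative identity $\partial_{\alpha^*}W_\rho = 2W_{a\rho}-2\alpha W_\rho$, a Cauchy--Schwarz step with the normal-ordered quadratic form, and the energy bound, assembled into exactly the stated $\delta I$. The only (cosmetic) difference is that you differentiate the restriction $f(\alpha)=\tr[\rho\,\Pi(\alpha\vec{1})]$ directly via the centre-of-mass mode $a_+$, whereas the paper bounds the full $M$-dimensional gradient mode-by-mode and uses $|\vec{\alpha}-\vec{\alpha}_0|\le\sqrt{M/2}\,\Delta$; both give identical constants, so your closing concern that the mode-by-mode route yields mismatched coefficients is unfounded.
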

The proof is given in Appendix~\ref{apd:rigorous-integral-error}.
Here $I$ is to be understood as the computed value of the numerical integration, while $\delta I$ is a rigorous bound on its error.

Let us study this technique with the three-mode $W$ state $\ketW[3]$, which has the energy bound $E = \sum_{m=1}^3\braW[3]a_m^\dag a_m\ketW[3] = 1$.
The numerical integral will be performed over the region $\mathcal{S}$ that approximates a circle centered at the origin with radius $r$, as illustrated in Figs.~\ref{fig:numerical-integration}(a) and \ref{fig:numerical-integration}(b) for grid spacings $\Delta \in \{0.2,0.1\}$.
The computed integral $I(\ketW[3],\Delta)-\delta I(\Delta)$ is plotted against $r$ in \cref{fig:numerical-integration}(c) for different choices of grid spacing $\Delta$.
We find that a grid spacing of at least $\Delta < 0.0058$ with radius $r \approx 0.9$ is required for GME of $\ketW[3]$ to be detected within these rigorous bounds.

\begin{figure}
    \centering
    \includegraphics[width=\columnwidth]{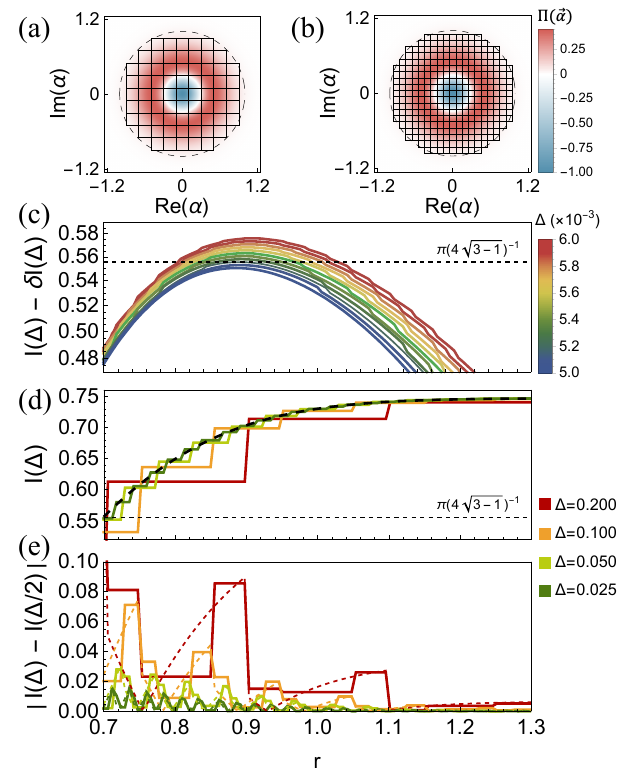}
    \caption{
        \label{fig:numerical-integration}
        We consider the three-mode $W$ state over the two-dimensional slice $\alpha_m=\alpha$ and its integral over the circular region $\abs{\alpha} < r$.
        This will be numerically computed by discretizing and approximating the integration region with a square grid $\mathcal{S}$, shown here for $r=1$ and grid spacings (a) $\Delta=0.2$ and (b) $\Delta=0.1$.
        (c) The numerical integral $I(\ketW[3],\Delta;\mathcal{S})$ minus the rigorous error bound $\delta I(\Delta;\mathcal{S},E=1)$ against the radius $r$.
        GME is detected when $I-\delta I > \pi(4\sqrt{3-1})^{-1}$.
        We find that the grid spacing must satisfy $\Delta < 0.0058$ for a theoretical guarantee that GME was detected.
        (d) The numerical integral  $I(\ketW[3],\Delta;\mathcal{S})$ for larger gap sizes $10^{-3} \lesssim \Delta \lesssim 10^{-1}$.
        While they do not violate the GME bound when the rigorous errors are accounted for, we find that computed quantities are still very close to the actual integral plotted in black.
        (e) Heuristically, the error can be approximated by taking the difference $|I(\ketW[3],\Delta;\mathcal{S})-I(\ketW[3],\Delta/2;\mathcal{S})|$ of two numerical integrals performed with different grid spacings (solid lines).
        While heuristic, it is in agreement to the same order of magnitude to the actual errors $|\mathcal{V}_{2D}(\ketW[3])-I(\ketW[3],\Delta;\mathcal{S})|$ (dashed lines).
    }
\end{figure}

This rigorous error estimate should be used if, under the assumptions that the measurements and energy bounds are well characterized, theoretical guarantees about the presence of GME are required.
Unfortunately, it is still experimentally demanding, as the required grid spacing and area of integration requires $\pi r^2/\Delta^2 \approx 76\,000$ phase-space points to be measured.

For a more tractable approach, we turn to heuristic error approximation methods.
First, we compute the numerical integral $I(\rho,\Delta)$ using different grid spacings, as shown in \cref{fig:numerical-integration}(d).
Then, we calculate the difference between the computed integral for two grid spacings $\Delta$ and $\Delta/2$, and take it to be approximately the error $|I(\rho,\Delta) - I(\rho,\Delta/2)| \approx |\mathcal{V}_{2D}(\rho)-I(\rho,\Delta)|$.
Both the actual and approximate errors are plotted in \cref{fig:numerical-integration}(e), where we find that the approximate errors are excellent order-of-magnitude estimates for the actual integration errors.
Furthermore, for all chosen grid spacings, both errors are suppressed to $\lesssim 0.02$ when the integral exceeds the radius $r > 0.9$.

Computing the same integral twice with a more and less precise algorithm, then taking their difference as an error estimate, is a common practice in numerical integration \cite{numerical-integration-error}.
Therefore, although no theoretical guarantee can be provided using this heuristic technique, it would at least strongly suggest the presence of GME in the system that might be detected using stronger tests.

We conclude this section by noting that the field of numerical integration is already very well-established: Sophisticated integration strategies, error estimation techniques, and even software libraries that perform automatic integration are widely available in the literature.
These techniques go beyond the scope of this work, and thus we shall refer the reader to any textbook dedicated to numerical integration for further details \cite{numerical-integration}.

\subsection{With beam splitters and measurements of finite characteristic function points}

In place of integration over a finite region, the witness in Sec.~\ref{wit:D-BS} detects GME with only a finite number of measurement settings.
It involves first choosing some coefficients $\vec{y},\vec{z} \in \mathbb{C}^M : \vec{y}\circ\vec{y}^* - \vec{z}\circ\vec{z}^* = \vec{1}$ and $N$ phase-space points $\Xi = \{\xi_n\}_{n=1}^N$ and then collecting the measurement outcomes of the system after performing a controlled beam splitter into the matrix $\mathbf{C} \in \mathbb{C}^{N \times N}$ as
\begin{equation}
\begin{aligned}
    [\mathbf{C}(\rho;\Xi)]_{n,n'} &= \frac{1}{N}\ev{\sigma_x \otimes  D(\xi_n-\xi_{n'})} \\
    &= \tr[\rho \; D_{+}\pqty{\sqrt{M}(\xi_n-\xi_{n'})} \; \Pi_{-M}],
\end{aligned}
\end{equation}
where $D_{+}$ is the displacement operator on the center-of-mass mode $\sqrt{M} a_+ = \sum_{m=1}^M a_m$, and $\Pi_{-M}$ is the parity operator on the relative modes $\{a_{-m}\}_{m=2}^M$ that satisfy the canonical commutation relations with $a_+$.

For a given $\Xi$, at most $N(N+1)/2$ measurement settings are required, with the other matrix elements imposed by the Hermiticity of $\mathbf{C}^\dagger = \mathbf{C}$.
In order to study the actual number of points needed, let us consider the implementation of this witness on some example states.

For the lossy $W$ state, the matrix elements of $\mathbf{C}(\rhoW;\Xi)$ are
\begin{equation}
\begin{aligned}
    [\mathbf{C}(\rhoW)]_{n,n'} = e^{-\frac{
        M\abs{\xi_n-\xi_{n'}}^2
    }{
        2
    }}\pqty{
        1 - M(1-\eta)\abs{\xi_n-\xi_{n'}}^2
    }.
\end{aligned}
\end{equation}
With this expression, we can explicitly construct the matrix and compute its trace norm $\|\mathbf{C}(\rhoW;\Xi)\|_1$, which we plot against $\eta$ for different dimensions $N$ in \cref{fig:witDBS_W}.

\begin{figure}
    \centering
    \includegraphics[width=\columnwidth]{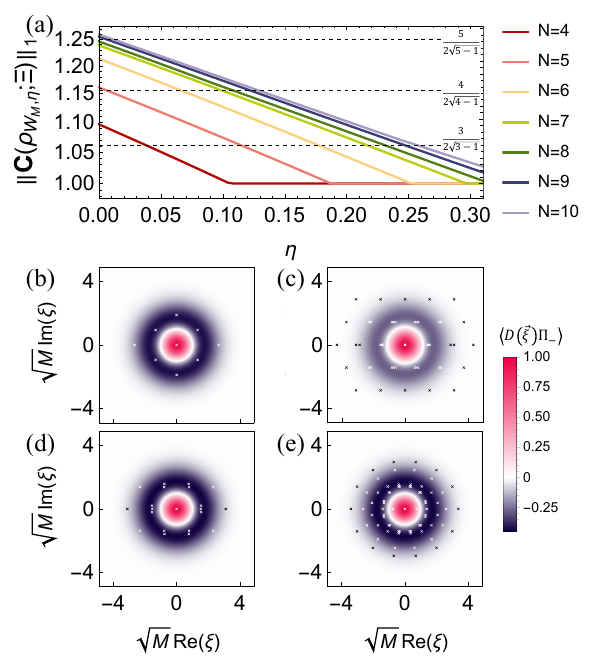}
    \caption{
        \label{fig:witDBS_W}(a) Implementation of the witness in Sec.~\ref{wit:D-BS} for noisy $W$ states, which only requires a finite number of measurement settings.
        The dashed horizontal lines are the GME thresholds: GME is detected when $\|\mathbf{C}\|_1 > M/(2\sqrt{M-1})$.
        Note that $\|\mathbf{C}\|_1$ is defined the same way for all $M$.
        With the property $\mathbf{C}=\mathbf{C}^\dag$, the number of measurement settings scales as $\sim N(N+1)/2$ where $N$ is the dimension of $\mathbf{C}$.
        Usually, fewer settings are needed due to duplicate entries in the matrix.
        For example, we find that (b) $5$ settings are enough to detect the $M=3$ $W$ state up to $\eta \leq 0.035$, (c) $27$ for $M=3$ up to $\eta \leq 0.255$, (d) $11$ settings for $M=4$ up to $\eta \leq 0.004$, and (e) $37$ settings for $M=5$ up to $\eta \leq 0.004$.
    }
\end{figure}
To obtain the plotted data, we numerically optimized $\|\mathbf{C}(\rhoW;\Xi)\|_1$ over $\Xi$ many times, each time starting with $N$ random points in phase space for $\Xi$, then reporting the maximum $\left\|\mathbf{C}\right\|_1$ found over all starting points.
The required measurement settings $\{\xi -\xi' : \xi,\xi' \in \Xi\}$ that correspond to the optimal $\left\|\mathbf{C}\right\|_1$ are reported for several choices of $\eta$ and $N$ in \cref{fig:witDBS_W}(b-e).
Generally, we find that the number of measurement settings required increases with both the loss parameter $\eta$ and the number of modes $M$.

That said, one still requires only $\sim 45$ phase-space points to detect the GME of a $W$ state for $3 \leq M \leq 5$.
This requires fewer points than the numerical integral in \cref{sec:P-finite} for $M=3$ with a large grid spacing $\Delta = 0.2$ and radius $r=1$, which requires measurements of the Wigner function at $\sim \pi r^2 /\Delta^2 \approx 78$ phase-space points.
Furthermore, unlike in the case of \cref{sec:P-finite}, we can theoretically guarantee the presence of GME with $\sim 45$ phase-space points for this witness, provided that the implemented measurements are well-characterized.

Performing the same analysis for the lossy entangled cat state, we similarly find that the matrix elements of $\mathbf{C}(\rhoCat;\Xi)$ are
\begin{equation}
\begin{aligned}
    &[\mathbf{C}(\rhoCat)]_{n,n'} = \frac{e^{-\frac{M\abs{\xi_n-\xi_{n'}}^2}{2}}}{1+e^{-2M\abs{\gamma}^2}}\bigg(\\
        &\quad\qquad\cos(2M\sqrt{1-\eta}\Im[(\xi_n-\xi_{n'})\gamma^*]) \\
        &\quad\qquad{}+{}
        e^{-2M\abs{\gamma}^2}\cosh(2M\sqrt{1-\eta}\Re[(\xi_n-\xi_{n'})\gamma^*])
    \bigg).
\end{aligned}
\end{equation}
As before, we explicitly construct $\mathbf{C}(\rhoCat;\Xi)$ with numerically optimized phase-space points $\Xi$, and plot its trace norm $\|\mathbf{C}(\rhoCat;\Xi)\|_1$ against the loss parameter $\eta$ and cat size $|\gamma|$ for different dimensions $N$ in \cref{fig:cat_state_hybrid}.

\begin{figure}
    \centering
    \includegraphics[width=\columnwidth]{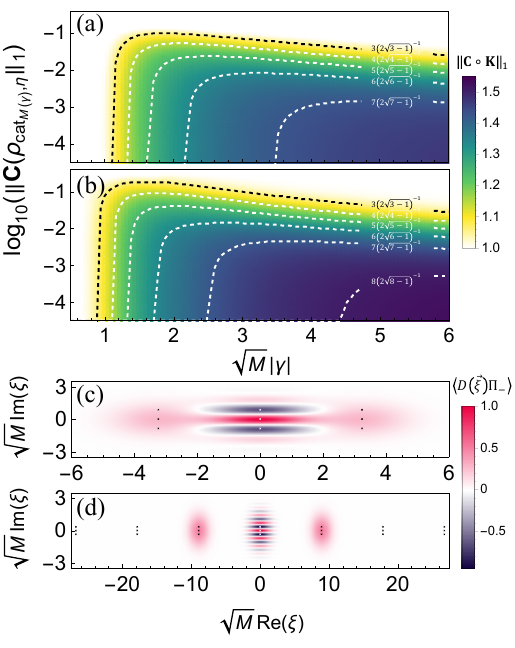}
    \caption{
        \label{fig:cat_state_hybrid}Implementation of the witness in Sec.~\ref{wit:D-BS} for noisy entangled cat states with magnitude $\sqrt{M}\abs{\gamma}$ for (a) $N=4$ and (b) $N=8$.
        The vertical axis is plotted with a base 10 log scale, dashed lines are $M(2\sqrt{M-1})^{-1}$, and GME is detected when $\|\mathbf{C}\|_1$ exceeds this value.
        Only a finite number of settings are needed to detect GME: (c) 5 settings for $M=3$ and $\eta = 0.1$, and (d) 11 settings for $M=8$ and $\eta = 5 \times 10^{-4}$.
    }
\end{figure}

As expected, the violation $\left\|\mathbf{C}\right\|_1-M(2\sqrt{M-1})^{-1}$ decreases with an increase in noise.
The GME of entangled cats for $M \leq 7$ modes can be detected with just $N=4$ dimensions of $\mathbf{C}$, while least $N=8$ is required for $M=8$ modes, although the actual number of measurement settings needed are fewer than for the $W$ state.
We also find that there is an initial increase in $\left\|\mathbf{C}\right\|_1$ against the cat size, but it slowly decreases after a threshold size is met.
This occurs because cat states are more susceptible to energy loss as they become larger \cite{cat-state-decoherence}.

\section{States Detected by Theorem~\ref{thm:GMN-reduced-state}}
We now turn to examples of states detected by Theorem~\ref{thm:GMN-reduced-state} and the witnesses in Secs.~\ref{wit:BS-A}~to~\ref{wit:D}. Like before, we shall first characterize the states detected directly by the theorem, before turning to some specific properties about its associated witnesses. In this case, the witness in Sec.~\ref{wit:BS-A} is a direct implementation of the theorem and therefore will not require a dedicated subsection to study its properties; while any state detected by the theorem is guaranteed to be detected by the witness in Sec.~\ref{wit:D}, as long as enough points of the characteristic function are measured, due to Bochner's theorem \cite{Bochner-theorem-1,Bochner-theorem-2}.

\subsection{GME generated via vacuum interference}
Like with the previous theorem, it is again convenient to consider the family of GME states prepared by interfering a state with the vacuum.
\begin{proposition}\label{prop:center-of-mass-nonclassicality-depth}
    Let $U_M$ be a multiport beam splitter such that $\forall m : \sqrt{M} U_M^\dag a_m U_M = a_1 + \sum_{m'\neq1}c_{m,m'} a_{m'}$.
    Given $\mathcal{U}_M(\rho_1) \coloneqq U_M(\rho_1\otimes\ketbra{0}^{\otimes {M-1}})U_M^\dag$,
    we have $\tau_c(\tr_{-}[\mathcal{U}_M(\rho_1)]) = \tau_c(\rho_1)$.
\end{proposition}
The proof is given in Appendix~\ref{apd:center-of-mass-nonclassicality-depth}.
The consequence of this proposition, together with \cref{col:nonclassicality-depth-reduced-state}, means that $\tau_c(\rho_1) > 1 - M^{-1}$ is sufficient for $\mathcal{U}_M(\rho_1)$ to be detected by \cref{thm:GMN-reduced-state}.

This family of detected states contains familiar examples: It is known that $\tau_c(\ket{1}) = \tau_c(\iketCat[1][\sqrt{M}\gamma]) = 1 > 1-M^{-1}$ \cite{nonclassicality-depth}, thus both the number-basis $W$ state $\mathcal{U}_M(\ket{1}) \propto \ket{10\dots 00} + \ket{01\dots00} + \ket{00\dots0 1}$ and the cat-basis GHZ state $\mathcal{U}_M(\iketCat[1][\sqrt{M}\gamma]) \propto \ket{\gamma}^{\otimes M} + \ket{-\gamma}^{\otimes M}$ can be detected by \cref{thm:GMN-reduced-state} and its associated witnesses for any finite number of modes $M$.

\subsection{GME beyond vacuum interference}
As before, the witness is not limited only to GME states generated through interference with the vacuum.
For the tripartite case, the smoothed Wigner functions of the example states $\iket{D_2^3}$, $\ket{\psi_1}$, and $\ket{\psi_2}$ studied previously in \cref{sec:beyond-vacuum-1} are
\begin{equation}
\begin{aligned}
    &\widetilde{W}_{\tr_{-}\iket{D_2^3}}(\alpha;\{\ket{1}\}) \\
    &\quad{}={} \frac{e^{-\frac{3}{2}\abs{\alpha}^2}}{32\pi}\pqty{
        81\abs{\alpha}^6  - 234\abs{\alpha}^4 + 216\abs{\alpha}^2 - 16
    } \\
    &\widetilde{W}_{\tr_{-}\ket{\psi_1}}(\alpha;\{\ket{0}\}) \\
    &\quad{}={} \frac{e^{-\frac{3}{2}\abs{\alpha}^2}}{1024\pi}\bigg(
        896 - 216\sqrt{23} + 81\abs{\alpha}^4\pqty\Big{
            16 + 3\sqrt{23}
        } \\
    &\qquad\qquad\qquad{}+{} 12\sqrt{2}\Re[\alpha]
    \pqty{16 + 3\sqrt{23}}
    \pqty{9\abs{\alpha}^2-4}
    \bigg) \\
    &\widetilde{W}_{\tr_{-}\ket{\psi_2}}(\alpha;\{\ket{0}\}) = \frac{e^{-\frac{3}{2}\abs{\alpha}^2}}{64\pi}\pqty{
        243\abs{\alpha}^4 - 144\abs{\alpha}^2 + 8
    }.
\end{aligned}
\end{equation}
Their quantities at particular phase-space points are $\widetilde{W}_{\tr_{-}\iket{D_2^3}}(0;\{\ket{1}\}) = -(2\pi)^{-1}$, $\widetilde{W}_{\tr_{-}\ket{\psi_1}}(0.3;\{\ket{0}\}) = -
0.17$, and $\widetilde{W}_{\tr_{-}\ket{\psi_2}}(0.5;\{\ket{0}\}) = -0.04$, which are all negative.
Therefore, these states are all also detected by \cref{thm:GMN-reduced-state}.

Further notice that we used $\ket{1}$ when specifying the smoothing kernel in defining $\widetilde{W}_{\tr_{-}\iket{D_2^3}}$.
If we instead used $\ket{0}$, we would have
\begin{equation}
    \widetilde{W}_{\tr_{-}\iket{D_2^3}}(\alpha;\{\ket{0}\}) = \frac{e^{-\frac{3}{2}\abs{\alpha}^2}}{24\pi}\pqty{
        8 + \pqty{9\abs{\alpha}^2 - 4}^2
    },
\end{equation}
which is a positive function.
This demonstrates the necessity in some cases to use a non-Gaussian kernel function $K(\alpha;\mathcal{R})$ for the GME of the state to be detected.

As another example, take the family of tripartite $N00N$ states, which are defined for $N \geq 1$ as
\begin{equation}
    \ket{\nu_N^3} \coloneqq \frac{1}{\sqrt{3}}\pqty{
        \ket{N00} + \ket{0N0} + \ket{00N}
    }.
\end{equation}
They are multipartite generalizations of the more familiar $\ket{N00N} \propto \ket{N0}+\ket{0N}$, and also share the same features as their bipartite counterparts, like their resourcefulness in metrological tasks like quantum phase estimation \cite{multipartite-N00N}.

The smoothed Wigner functions for the first few tripartite $N00N$ states are
\begin{align}
    &\begin{aligned}
        &\widetilde{W}_{\tr_{-}\iket{\nu_2^3}}(\alpha;\{\ket{1}\}) \\
        &\qquad{}={} \frac{e^{-\frac{3}{2}\abs{\alpha}^2}}{64\pi}\pqty{
            16 + 264\abs{\alpha}^2 - 234\abs{\alpha}^4 +81\abs{\alpha}^6
        },
    \end{aligned}\\
    &\begin{aligned}
        &\widetilde{W}_{\tr_{-}\iket{\nu_3^3}}(\alpha;\{\ket{0}\}) \\
        &\qquad{}={} \frac{e^{-\frac{3}{2}\abs{\alpha}^2}}{64\pi}\pqty{
            {-16} + 216\abs{\alpha}^2 - 54\abs{\alpha}^4 + 27\abs{\alpha}^6
        },
    \end{aligned}\\
    &\begin{aligned}
        &\widetilde{W}_{\tr_{-}\iket{\nu_4^3}}(\alpha;\{\ket{1}\}) \\
        &\qquad{}={} \frac{e^{-\frac{3}{2}\abs{\alpha}^2}}{4096\pi}\Big(
            {-256} + 16512\abs{\alpha}^2 - 21312\abs{\alpha}^4 \\
            &\qquad\qquad\qquad\qquad{}+{} 12384\abs{\alpha}^6 - 1998\abs{\alpha}^8 + 243\abs{\alpha}^{10}
        \Big),
    \end{aligned}\\
    &\begin{aligned}
        &\widetilde{W}_{\tr_{-}\iket{\nu_5^3}}(\alpha;\{\ket{0}\}) \\
        &\qquad{}={} \frac{e^{-\frac{3}{2}\abs{\alpha}^2}}{20480\pi}\Big(
        {-1280} + 28800\abs{\alpha}^2 - 14400\abs{\alpha}^4 \\
            &\qquad\qquad\qquad\qquad{}+{} 21600\abs{\alpha}^6 - 1350\abs{\alpha}^8 + 243\abs{\alpha}^{10}
        \Big).
    \end{aligned}
\end{align}
Here we find that $\widetilde{W}_{\tr_{-}\iket{\nu_N^3}}(\alpha;\iket{(1+(-1)^N)/2})$ is negative for $3 \leq N \leq 5$ and that the negativity is present at the origin $\alpha = 0$.
Extending this intuition to larger $N$, we compute in Appendix~\ref{apd:tripartite-N00N} that
\begin{equation}
    \widetilde{W}_{\tr_{-}\iket{\nu_N^3}}\pqty{
        0;\{\ket{\scriptscriptstyle\frac{1+(-1)^N}{2}}\}
    } = -\frac{2^{-N}}{\pi} \times \begin{cases}
        2 & \text{if $N$ odd,}\\
        N-3\!\! & \text{otherwise.}
    \end{cases}
\end{equation}
Therefore, all tripartite $N00N$ states with $N \neq 2$ can be detected by \cref{thm:GMN-reduced-state}.

Finally, we note that we were unable to find an appropriate filter function that detects Dicke states with \cref{thm:GMN-reduced-state} for $M>3$ modes.
Instead, this witness can detect other classes of states that involve superpositions of higher-order Dicke and $N00N$ states of the form
\begin{equation}
\begin{aligned}
    \ket{\{n_1 n_2 \cdots n_M\}} &\propto \ket{n_1n_2\cdots n_M} + \ket{n_2n_1\cdots n_M} \\
    &\qquad{}+{} \text{\emph{all other permutations}}.
\end{aligned}
\end{equation}
For example, for the $M=4,5$ states
\begin{equation}
\begin{aligned}
    \ket{\psi_4} &= \frac{1}{\sqrt{2}}\pqty{
        \ket{\{2100\}} + \ket{\{1110\}}
    }, \\
    \ket{\psi_5} &= \frac{1}{\sqrt{2}}\pqty{
        \ket{\{21100\}} + \ket{\{11110\}}
    },
\end{aligned}
\end{equation}
their smoothed Wigner functions, defined with a filter function specified by $\mathcal{R} = \{\ket{1}\}_{m=1}^{M-2}$, obtain the values $\widetilde{W}_{\ket{\psi_4}}(0;\mathcal{R}) = -(139\sqrt{3}-144)/512\pi$ and $\widetilde{W}_{\ket{\psi_5}}(0;\mathcal{R}) = -2(11\sqrt{6}-4)/81\pi$.
Therefore, both these states can be detected using \cref{thm:GMN-reduced-state}.

\subsection{Robustness in the presence of noise}
Consider again the damping channel $\mathcal{D}_\eta$ from before.
By the same proof as \cref{prop:center-of-mass-nonclassicality-depth}, it can be shown that the nonclassicality depth of a lossy state is $\tau_c(\tr_{-}[\mathcal{D}_\eta(\rho)]) = (1-\eta)\tau_c(\tr_{-}\rho)$.
Hence, the GME of $\rho$ can be robustly detected in the presence of loss when $(1-M^{-1})(1-\eta)^{-1} < \tau_c(\tr_{-}\rho) \leq 1$.
This means that the threshold amount of tolerable noise decreases with the number of modes as $\eta < M^{-1}$, which is a scaling that can be challenging for noisy systems with large number of modes.
Nonetheless, it still shows that there are noisy GME states that can be detected by this criterion for all finite $M$.

Let us take a look at some explicit examples.
The Wigner function of a lossy $M$-mode $W$ state reads
\begin{equation}
    W_{\rhoW}(\vec{\alpha}) =
    \frac{2^M e^{-2\abs{\vec{\alpha}}^2}}{\pi^M}\pqty{\frac{
        4(1-\eta)
        |\vec{1}^T\vec{\alpha}
    |^2}{M} + 2\eta-1}.
\end{equation}
Now, to apply \cref{thm:GMN-reduced-state}, choose the convolution kernel $K(\alpha;\mathcal{R})$ such that $\mathcal{R}= \{\varrho_m = \ket{0}\}_{m=1}^{M-2}$, with which we obtain the smoothed Wigner function
\begin{equation}
\begin{aligned}
    \widetilde{W}_{\tr_{-}\rhoW}(\alpha) &=
    \frac{4e^{-2\abs{\alpha}^2}}{\pi(M-1)}\pqty{
        M(1-\eta)\abs{\alpha}^2 + \frac{M\eta-1}{2}
    }.
\end{aligned}
\end{equation}
Substituting this into \cref{thm:GMN-reduced-state} with $\alpha = 0$ gives $(1-M\eta)/(M-1) \leq \min_{\sigma\notin\operatorname{GME}}\| \sigma-\mathcal{D}_\eta(\ketW) \|_1$.
As expected, the GME of a lossy $M$-mode $W$ state can be detected with this witness for $\eta < 1/M$, although the magnitude of violation decreases with $\eta$.

Comparing this to the previous witness, we find that \cref{col:GME-Wigner-witness} detects $\rhoW[3]$ with loss $\eta_{\max} \approx 0.36$ which is more robust than the threshold $\eta_{\max} \approx 0.33$ for \cref{thm:GMN-reduced-state}.
Meanwhile, the converse is true for $3 < M \leq 6$ modes: Notably, the noise threshold for $\rhoW[6]$ is $\eta_{\max} \approx 0.16$ for \cref{thm:GMN-reduced-state}, which is more than double the threshold of $\eta_{\max} \approx 0.07$ for \cref{col:GME-Wigner-witness}.
Most drastically, the previous witnesses fail to detect any $W$ states for $M > 6$, while the current witness can detect them for any number of modes.

As noted above, the witnesses in Secs.~\ref{wit:BS-A} and \ref{wit:BS-Rand} are direct implementations of Theorem~\ref{thm:GMN-reduced-state}.
Therefore, any GME that can be detected by the theorem, with the restriction that the kernel is a Gaussian kernel for the witness in \ref{wit:BS-Rand}, can be directly detected by those witnesses.

\subsection{With measurement of finite characteristic function points}
If only characteristic function measurements can be implemented in the experimental setup, then only the witness in Sec.~\ref{wit:D} can be used.
In this section, we shall study the effectiveness of the witness in Sec.~\ref{wit:D} for detecting the GME of noisy W and entangled cat states.

Let us start with the noisy $M$-mode $W$ state.
By choosing the vacuum in specifying the kernel matrix, and parameterizing the phase-space points as $\Xi = \{\xi_n/\sqrt{2(M-1)}\}_{n=1}^N$, we have
\begin{equation}
\begin{aligned}
    &\bqty{
        \mathbf{C}\pqty{\rhoW;\Xi}
        \circ
        \mathbf{K}\pqty{\{\ket{0}\}_{m=1}^{M-2};\Xi}
    }_{n,n'} \\
    &= \frac{e^{-\frac{M\abs{\xi_n-\xi_{n'}}^2}{4(M-1)}}}{N}\pqty{
        1-(1-\eta)\tfrac{M\abs{\xi_n-\xi_{n'}}^2}{2(M-1)}
    } e^{-\frac{(M-2)\abs{\xi_n-\xi_{n'}}^2}{4(M-1)}} \\
    &= \frac{e^{-\frac{\abs{\xi_n-\xi_{n'}}^2}{2}}}{N}\Bqty{
        1 -\bqty{
            1 - \pqty{\tfrac{1+\eta}{2} - \tfrac{1-\eta}{2(M-1)}}
        }\abs{\xi_n-\xi_{n'}}^2
    },
\end{aligned}
\end{equation}
whereupon we find that $\left\|\mathbf{C}(\rhoW)\circ\mathbf{K}\right\|_1$, when maximized over $\Xi$, depends only on the parameter $\zeta(M,\eta) \coloneqq (1+\eta)/2 - (1-\eta)(M-1)^{-1}/2$.
We use this parametrization as it can be further shown that $[\mathbf{C}(\rhoW)\circ\mathbf{K}]_{n,n'} = \chi_{\rho_{\zeta}}(\xi_n-\xi_{n'})$, where we take the right-hand side to be the characteristic function of the effective single-mode state $\rho_{\zeta} = (1-\zeta)\ketbra{1} + \zeta\ketbra{0}$.
This gives $\zeta$ an interpretation as an effective noise parameter, and since $\left\|\mathbf{C}(\rhoW)\circ\mathbf{K}\right\|_1 \leq \int_{\mathbb{C}}\dd[2]{\alpha} |W_{\rho_{\zeta}}(\alpha)|$, it further informs us that $\zeta(M,\eta) < 1/2$ is a necessary condition for the GME of $\rhoW$ to be detected with the witness in Sec.~\ref{wit:D} for this choice of matrix kernel.

In implementing this witness, we numerically maximize $\left\|\mathbf{C}(\rhoW)\circ\mathbf{K}\right\|_1$ over $\Xi$ for a fixed dimension $N$ of $\Xi$.
This is plotted against the effective noise parameter $\zeta(M,\eta)$ in \cref{fig:W_state}(a).
We find that the matrix dimension $N \geq 6$ is needed to detect a tripartite $W$ state, which requires more measurement settings than the witness in Sec.~\ref{wit:D-BS}.
On the other hand, $N \geq 19$ is needed to detect an $M=7$ partite $W$ state, whereas the witness in Sec.~\ref{wit:D-BS} cannot detect $\rhoW[7]$ at all.

\begin{figure}
    \centering
    \includegraphics[width=\columnwidth]{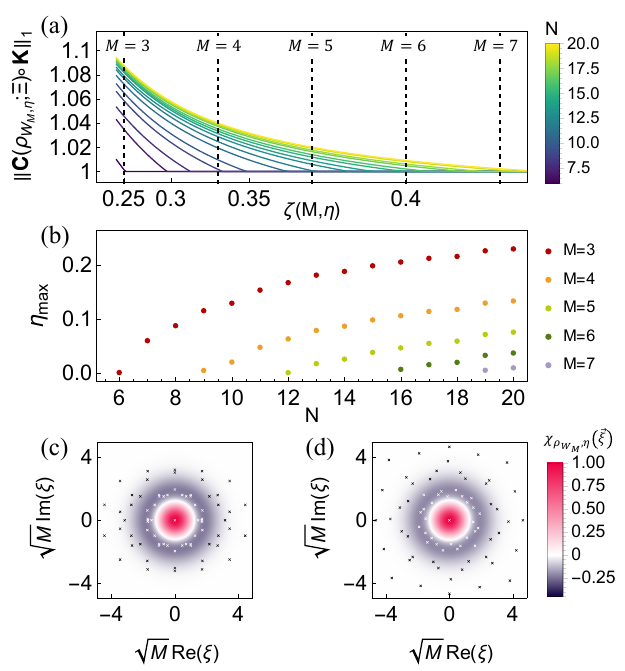}
    \caption{
        \label{fig:W_state}(a) Implementation of the witness in Sec.~\ref{wit:D} for noisy $W$ states, which only require finite number of measurement settings.
        For any $M$ and $\eta$, the quantity $\left\|\mathbf{C}(\rhoW)\circ\mathbf{K}\right\|_1$ depends only on the parameter $2\zeta(M,\eta) \coloneqq (1+\eta) - (1-\eta)/(M-1)$ used in the horizontal axis.
        GME is detected when $\|\mathbf{C}\circ\mathbf{K}\|_1 > 1$ for the given $\eta$ and $M$, while the dashed vertical lines are the GME thresholds for $\eta = 0$ and $M \in \{3,4,5,6,7\}$.
        (b) The maximum loss $\eta_{\max}$ such that the GME of $\rhoW$ can still be detected, plotted against $N$ for different values of $M$.
        While $N$ is the matrix dimension and thus indicative of the number of measurement settings required, the actual number of measurement settings tend to be fewer due to duplicates.
        For example, the detection of both (c) $M=4$ and (d) $M=5$
        both require measurements at 36 points, although the former requires the construction of a $9\times9$ matrix while the latter requires a $12\times12$ matrix.
    }
\end{figure}

In order to study the robustness of the witness more closely, we also plot in \cref{fig:W_state}(b) the maximum noise parameter $\eta_{\max}$, such that $\rhoW$ can still be detected by this witness with an $N$-dimensional matrix.
As expected, making more measurements allows us to detect noisier states, and states with fewer modes are more robust against loss.
With $N\sim 15$, which corresponds to measurements of $\sim 100$ phase-space points, the witness can detect tripartite $W$ states up to $\eta \lesssim 0.2$ and four-partite $W$ states up to $\eta \lesssim 0.1$.

Note however that the actual number of measured points needed can also be fewer than the size of matrix dimension in general.
Take the example of $\rhoW[4]$ and $\rhoW[5]$, as shown in Refs.~\ref{fig:W_state}(c) to \ref{fig:W_state}(d).
While we need $N=9$ to detect $\rhoW[4]$ and $N=12$ to detect $\rhoW[5]$, both require measurements at just 37 points in phase space.

Finally, we highlight a unique feature of this witness not present in previous ones.
In its implementation, let us assume that the matrix elements of $\mathbf{C}$ has already been obtained experimentally.
Then, one still has the freedom to choose the matrix kernel $\mathbf{K}(\{\varrho_m\}_{m=1}^{M-2})$ when computing $\left\|\mathbf{C}\circ\mathbf{K}\right\|_1$.
Hence, we can still optimize $\max_{\mathbf{K}}\left\|\mathbf{C}\circ\mathbf{K}\right\|_1$ over the matrix kernels with little experimental overhead, as this only requires classical postprocessing without needing more measurements.

Take the tripartite entangled cat state, for which
\begin{equation}
\begin{aligned}
    &[\mathbf{C}(\ketCat[3];\Xi)]_{n,n'} = \frac{e^{-\frac{3\abs{\xi_n-\xi_{n'}}^2}{2}}}{1+e^{-6\abs{\gamma}^2}}\bigg(\\
        &\qquad\cos(6\Im[(\xi_n-\xi_{n'})\gamma^*]) \\
        &\qquad{}+{}
        e^{-6\abs{\gamma}^2}\cosh(6\Re[(\xi_n-\xi_{n'})\gamma^*])
    \bigg),
\end{aligned}
\end{equation}
and imagine the scenario where the system has been measured and $\mathbf{C}(\ketCat[3];\Xi)$ already constructed for fixed points $\Xi = \{\pm\xi_0,\pm\xi_0^*,\pm \Re[\xi_0]\}$ with $\xi_0 = 10/11 + i7/17$.

Then, we shall optimize $\left\|\mathbf{C}\circ\mathbf{K}_s\right\|_1$ over a family $\mathbf{K}_s(\Xi) \coloneqq \mathbf{K}(\{\ket{s}_{\text{sq}}\}_{m=1}^{M-2};\Xi)$ of kernel matrices, where $\ket{s}_{\text{sq}} \coloneqq e^{\ln(s)(a^{2} - a^{\dag2})/2}\ket{0}$ is the squeezed vacuum and $s$ is a squeezing parameter, with $s=1$ corresponding to the vacuum case.
The elements of the matrix kernel are
\begin{equation}
    [\mathbf{K}_s(\Xi)]_{n,n'} =
        e^{-\frac{s^2(M-2)}{2}\Re[\xi_n-\xi_{n'}]^2}
        e^{-\frac{M-2}{2s^2}\Im[\xi_n-\xi_{n'}]^2}.
\end{equation}
The computed $\left\|\mathbf{C}\circ\mathbf{K}_s\right\|_1$ is plotted against $s$ in \cref{fig:witD_cat}.

\begin{figure}
    \centering
    \includegraphics[width=80mm]{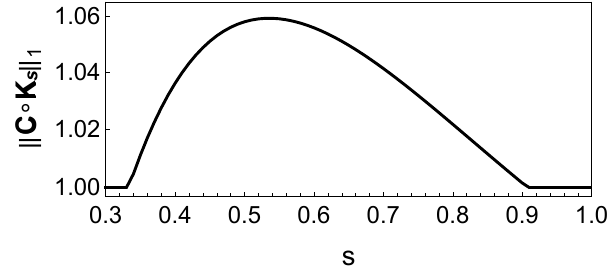}
    \caption{
        \label{fig:witD_cat} The trace norm of $\mathbf{C}\circ\mathbf{K}_s$ for fixed characteristic function measurements at phase-space points $\Xi$, plotted for different choices of the kernel matrix $\mathbf{K}_s$.
        Here, $\mathbf{K}_s$ is related to the characteristic function of the squeezed state with squeezing parameter $s$, where $s=1$ corresponds to the vacuum state $\ket{0}$.
        For the chosen $\Xi$, the tripartite entangled cat with cat size $|\gamma|=1$ cannot be detected with the vacuum matrix kernel $\mathbf{K}_{s=1}$.
        The maximum violation is $\left\|\mathbf{C}\circ\mathbf{K}_{s=0.535}\right\|_1 = 1.059$.
    }
\end{figure}

We find that, given the fixed choice of $\Xi$, the GME of $\ketCat[3]$ can only be detected when $0.33 \leq s \leq 0.91$, with $\max_s \left\|\mathbf{C}\circ\mathbf{K}_s\right\|_1 = 1.059$ at $s=0.535$.
In particular, GME would not have been detected if vacuum states were chosen in defining the kernel matrix.
This shows that the detection of GME with this technique can be improved at no further experimental cost, by simply post-processing the measured data.

\section{Conclusion}

In this work, we built upon the two theorems proven in our companion Letter~\cite{CV-GME-letter} that relate Wigner function negativity to GME detection, and developed approaches for their experimental implementation with controlled Gaussian unitaries.
In particular, we showed that controlled parity, displacement, and beam-splitter operations, either individually performed or in combination with each other, allow the GME of a variety of relevant multipartite quantum states to be detected.

To benchmark our GME criteria, we analyzed their applicability to paradigmatic families of multipartite entangled states that are of realistic and relevant experimental realization.
These include multimode W and cat states, as well as two-excitation Dicke states and tripartite states without permutation symmetry.
Meanwhile, to assess the robustness of our witnesses against imperfections, we examined the effect of energy relaxation on the capability of our witness to detect GME.
Last, considering realistic experimental implementations, we also studied the effects of finite detection resolution and discrete phase-space sampling in performing Wigner function measurements.

In summary, we have presented a number of approaches to detect GME in continuous-variable systems through the measurement of the Wigner or characteristic function over only a finite region or finite number of points in phase space.
These methods pave the way for certifying GME in a variety of experimental platforms where conventional criteria based on quadrature measurements can be difficult to implement.
Examples of such platforms include cavity and circuit quantum electrodynamics, circuit quantum acoustodynamics, as well as trapped ions and atoms, where bosonic degrees of freedom are not accessed directly, but rather read out via a coupled two-level qubit.
Establishing GME in such platforms is a crucial step toward benchmarking the quantum resources needed for a range of quantum information tasks, including quantum computing, metrology, communication, and sensing.

Last, let us mention some aspects of our witnesses that are beyond the scope of the current work, but could inspire further research.
One is the optimal choice of the 2D slice upon which to measure the Wigner or characteristic function when implementing the witnesses, i.e., the best choice of the coefficients $\vec{y}$ and $\vec{z}$ with $\vec{y}\circ\vec{y}^* - \vec{z}\circ\vec{z}^* = \vec{1}$ that can detect GME of the studied state.
For states generated via vacuum interference, \cref{prop:center-of-mass-absolute-volume,prop:center-of-mass-nonclassicality-depth} show that $\vec{y} = \vec{1}$ and $\vec{z} = \vec{0}$ is a suitable choice.
Meanwhile, our worked examples of the Dicke and $N00N$ states suggest that a good rule of thumb is to choose $\vec{y} = \vec{1}$ and $\vec{z} = \vec{0}$ for states that are invariant under permutation of the modes.
However, whether this choice of 2D slice is optimal for detecting the GME of the above states, and more generally, the strategy for choosing the most optimal 2D slice for an arbitrary state, remains an open question.

Another aspect is the possibility of constructing stronger continuous-variable entanglement witnesses using similar techniques.
While we showed that GME can be detected using our criteria for any finite number of modes $M$, we also found that the robustness decreases with $M$, which means that GME becomes harder to detect for states with a large number of modes.
This could be due to the fact that all of our criteria detect GME by assessing only $2$ of the full $2M$ phase-space dimensions, which constitutes a vanishingly small proportion of the full information available about an $M$-mode state.
Extending the witness by incorporating more information about the state can only improve the detection of GME, and could therefore improve robustness for systems with more modes.
Such extensions, like analogous relationships between GME and negativity in a $2M'$-dimensional slice of the Wigner function for $1 < M' < M$, or the trade-off between robustness and scalability in terms of the required measurements, are left to future works.

\begin{acknowledgments}
This work is supported by the National Research Foundation, Singapore, under its Centre for Quantum Technologies Funding Initiative (S24Q2d0009), the National Natural Science Foundation of China (Grants No. 12125402, No. 12534016, No. 12405005, No. 12505010), the Beijing Natural Science Foundation (Grant No. Z240007), and the Quantum Science and Technology-National Science and Technology Major Project (No. 2024ZD0302401, No. 2021ZD0301500).
J.G. acknowledges the support of the Postdoctoral Fellowship Program of CPSF (No. GZB20240027), and the China Postdoctoral Science Foundation (No. 2024M760072).
S.L. acknowledges the China Postdoctoral Science Foundation (No. 2023M740119).
M.F. was supported by the Swiss National Science Foundation Ambizione Grant No. 208886, and by The Branco Weiss Fellowship -- Society in Science, administered by the ETH Z\"{u}rich.
\end{acknowledgments}

\section*{Data Availability}
The data that support the findings of this article are openly available \cite{data}.

\bibliography{refs}

\appendix

\begin{widetext}
\section{\label{apd:proofs}Proofs of Propositions}
\subsection{\label{apd:bochner-extended}Proof of Proposition~\ref{prop:bochner-extended}}
\begin{proof}[Proof of Proposition~\ref{prop:bochner-extended}]
    Let us consider two scenarios: First, assume that $\int_{\mathbb{C}^M}\dd[2M]{\vec{\alpha}} \tr[ R \Pi(\vec{\alpha})]$ does not converge absolutely.
    Then, the proposition is trivially true since the right-hand side of the inequality will be infinite, while the left-hand side is the trace of a finite-dimensional matrix, so $\|\mathbf{C}\|_1 < \infty$.

    Otherwise, $\int_{\mathbb{C}^M}\dd[2M]{\vec{\alpha}} |\tr[R \Pi(\vec{\alpha})]|$ takes a finite value.
    Then, we start with the usual Wigner function construction that for any positive trace-class operator $\rho \succeq 0$, ${\tr}[\rho D(\vec{\xi})] = (2/\pi)^M\int_{\mathbb{C}^M}\dd[2M]{\vec{\alpha}} e^{\vec{\xi}\wedge\vec{\alpha}} \tr[ \rho \Pi(\vec{\alpha})]$.
    This can be extended to any trace-class operator $R$. First, it being trace class implies that we can decompose $R = R_+ - R_-$ with $R_\pm \succeq 0$ and
    $\left\|R_\pm\right\|_1 = \tr(R_\pm) < \infty$.
    Then, since $\Pi(\alpha)$ is unitary, $\Pi(\alpha)R$ is itself trace class, and thus the difference $\tr[\Pi(\alpha)R_+-\Pi(\alpha)R_-] = \tr[\Pi(\alpha)R_+] - \tr[\Pi(\alpha)R_-]$ is linear in the trace.
    Next, by Fubini's theorem, the absolute convergence of the integral of $e^{\vec{\xi}\wedge\vec{\alpha}}\pqty{\tr[\Pi(\alpha)R_+] - \tr[\Pi(\alpha)R_-]}$ allows us to split the integral into a difference of two integrals, use the result for positive trace-class operators, then combine the difference using the linearity of the trace of the trace-class operators $D(\vec{\xi})R_\pm$.

    The end point of the above discussion is the relation $
        {\tr}[R D(\vec{\xi})] = (2/\pi)^M
        \int_{\mathbb{C}^M}\dd[2M]{\vec{\alpha}}
        e^{\vec{\xi}\wedge\vec{\alpha}}
        \tr[ R \Pi(\vec{\alpha})]
    $ between the matrix elements of $\mathbf{C}(R;\xi)$ and their Fourier transform.
    Then, by writing $
        \|\mathbf{C}(R;\Xi)\|_1 =
        \tr[\mathbf{C}(R;\Xi)\, P_+] -
        \tr[\mathbf{C}(R;\Xi)\, P_-]$,
    where $P_{\pm}$ is the projector onto the positive and negative eigenspace of $\mathbf{C}(R;\Xi)$, respectively, we have
    \begin{equation}
    \begin{aligned}
        \|\mathbf{C}(R;\Xi)\|_1 &=
        \pqty{\frac{2}{\pi}}^M\int_{\mathbb{C}^M}\dd[2M]{\vec{\alpha}}
        \overbrace{
            \frac{1}{N}\sum_{n,n'=1}^N
                e^{\vec{\xi}_n\wedge\vec{\alpha}}
                [P_+]_{n,n'}
                e^{-\vec{\xi}_{n'}\wedge\vec{\alpha}}
        }^{\eqqcolon \vec{u}_{\vec{\alpha}}^\dag P_+ \vec{u}_{\vec{\alpha}}}
        \tr[R\Pi(\vec{\alpha})] \\
        &\qquad{}-{} \pqty{\frac{2}{\pi}}^M\int_{\mathbb{C}^M}\dd[2M]{\vec{\alpha}}
        \underbrace{
            \frac{1}{N}\sum_{n,n'=1}^N
                e^{\vec{\xi}_n\wedge\vec{\alpha}}
                [P_-]_{n,n'}
                e^{-\vec{\xi}_{n'}\wedge\vec{\alpha}}
        }_{\eqqcolon \vec{u}_{\vec{\alpha}}^\dag P_- \vec{u}_{\vec{\alpha}}}
        \tr[R\Pi(\vec{\alpha})],
    \end{aligned}
    \end{equation}
    where $\vec{u}_{\vec{\alpha}} \coloneqq (
        e^{-\vec{\xi}_{1}\wedge\vec{\alpha}},
        e^{-\vec{\xi}_{2}\wedge\vec{\alpha}},
        \dots,
        e^{-\vec{\xi}_{N}\wedge\vec{\alpha}}
    )/\sqrt{N}$.
    Since $\vec{u}_{\vec{\alpha}}$ are unit vectors and $P_{\pm}$ are projectors, we have $0 \leq \vec{u}_{\vec{\alpha}}^\dagger P_{\pm} \vec{u}_{\vec{\alpha}} \leq 1$, and as such
    \begin{align*}
        \|\mathbf{C}(R;\Xi)\|_1 &=
        \pqty{\frac{2}{\pi}}^M\int_{\mathbb{C}^M}
        \dd[2M]{\vec{\alpha}} \;
        \vec{u}_{\vec{\alpha}}^\dag P_+ \vec{u}_{\vec{\alpha}} \;
        \tr[R\Pi(\vec{\alpha})] \\
        &\qquad{}-{} \pqty{\frac{2}{\pi}}^M\int_{\mathbb{C}^M}
        \dd[2M]{\vec{\alpha}} \;
        \vec{u}_{\vec{\alpha}}^\dag P_- \vec{u}_{\vec{\alpha}} \;
        \tr[R\Pi(\vec{\alpha})] \\
        &\leq \pqty{\frac{2}{\pi}}^M\int_{\mathbb{C}^M}
        \dd[2M]{\vec{\alpha}} \;
        \vec{u}_{\vec{\alpha}}^\dag P_+ \vec{u}_{\vec{\alpha}} \;
        \begin{cases}
            \tr[R\Pi(\vec{\alpha})] & \text{if $\tr[R\Pi(\vec{\alpha})] \geq 0$,} \\
            0 & \text{otherwise,}
        \end{cases} \\
        &\qquad{}-{} \pqty{\frac{2}{\pi}}^M\int_{\mathbb{C}^M}
        \dd[2M]{\vec{\alpha}} \;
        \vec{u}_{\vec{\alpha}}^\dag P_- \vec{u}_{\vec{\alpha}} \;
        \begin{cases}
            0 & \text{if $\tr[R\Pi(\vec{\alpha})] \geq 0$,} \\
            \tr[R\Pi(\vec{\alpha})] & \text{otherwise,}
        \end{cases} \\
        &\leq \pqty{\frac{2}{\pi}}^M\int_{\mathbb{C}^M}
        \dd[2M]{\vec{\alpha}} \;
        \begin{cases}
            \tr[R\Pi(\vec{\alpha})] & \text{if $\tr[R\Pi(\vec{\alpha})] \geq 0$,} \\
            0 & \text{otherwise,}
        \end{cases} \\
        &\qquad{}-{} \pqty{\frac{2}{\pi}}^M\int_{\mathbb{C}^M}
        \dd[2M]{\vec{\alpha}} \;
        \begin{cases}
            0 & \text{if $\tr[R\Pi(\vec{\alpha})] \geq 0$,} \\
            \tr[R\Pi(\vec{\alpha})] & \text{otherwise,}
        \end{cases} \\
        &= \pqty{\frac{2}{\pi}}^M\int_{\mathbb{C}^M}
        \dd[2M]{\vec{\alpha}} \abs\Big{
            \tr[R\Pi(\vec{\alpha})]
        }.
    \end{align*}
\end{proof}

\subsection{\label{apd:center-of-mass-parity-is-beam splitter}Proof of Proposition~\ref{prop:center-of-mass-parity-is-beam splitter}}
\begin{proof}[Proof of Proposition~\ref{prop:center-of-mass-parity-is-beam splitter}]
    In terms of the collective modes $\{a_{+}\} \cup \{a_{-m}\}_{m=2}^{M}$, each local mode can be written as $a_m = a_+/\sqrt{M} + \sum_{m'=2}^M c_{m,m'} a_{-m'}$. Then,
    \begin{equation}
    \begin{aligned}
        \Pi_{+M} a_m \Pi_{+M} &= \frac{\Pi_{+M}a_+\Pi_{+M}}{\sqrt{M}} + \sum_{m'=2}^M c_{m,m'} \Pi_{+M} a_{-m'} \Pi_{+M}
        = - \frac{a_+}{\sqrt{M}} + \sum_{m'=2}^M c_{m,m'}  a_{-m'} \\
        &= - 2\frac{a_+}{\sqrt{M}} + \underbrace{\pqty{\frac{a_+}{\sqrt{M}} + \sum_{m'=2}^M c_{m,m'}  a_{-m'}}}_{=a_m}
        = a_m - \frac{2}{M}\sum_{m'=1}^M a_{m'} \\
        &= \pqty{1-\frac{2}{M}}a_m - \frac{2}{M}\sum_{m'\neq m} a_{m'}.
    \end{aligned}
    \end{equation}
\end{proof}

\subsection{\label{apd:center-of-mass-absolute-volume}Proof of \cref{prop:center-of-mass-absolute-volume}}
\begin{proof}[Proof of \cref{prop:center-of-mass-absolute-volume}]
First, note that because $\sqrt{M}a' \coloneqq \sum_{m=1}^M U_M^\dag a_m U_M$ satisfies $[a',a^{\prime\dagger}] = \mathbbm{1}$, and we also have that $a' = a_1 + \sum_{m=1}^M\sum_{m'=2}^M c_{m,m'}a_{m'}$, it must be that $\sum_{m=1}^M U_M^\dag a_m U_M = \sqrt{M}a_1$ for the commutation relation to hold.
From this, we can find that
\begin{equation}
    U_M^\dag D(\alpha\vec{1}) U_M =
    U_M^\dag e^{(\alpha\vec{1})\wedge\vec{a}} U_M =
    U_M^\dag e^{\alpha\wedge(\sum_{m=1}^M a_m)} U_M =
    e^{\alpha\wedge(\sum_{m=1}^M U_M^\dag a_m U_M)}=
    e^{\sqrt{M}\alpha\wedge a_1} =
    D_1(\sqrt{M}\alpha).
\end{equation}
Then, since beamsplitting operations preserve the total energy $\vec{a}^\dag\vec{a}$, we also have that $U_M^\dag \Pi U_M = \Pi$.
Putting everything together,
\begin{equation}
\begin{aligned}
    \mathcal{V}_{2D}(\mathcal{U}_M(\rho_1);\mathbb{C})
    &= \pqty{\frac{\pi}{2}}^{M-1}
    \int_{\mathbb{C}}\dd[2]{\alpha} \abs{W_{\mathcal{U}_M(\rho_1)}(\alpha\vec{1})} \\
    &= \frac{2}{\pi}\int_{\mathbb{C}}\dd[2]{\alpha}\abs{
        \tr[
            U_M
            \pqty{\rho_1 \otimes \ketbra{0}^{\otimes M-1}}
            U_M^\dag
            D(\alpha\vec{1}) \Pi D^\dag(\alpha\vec{1})
        ]
    } \\
    &= \frac{2}{\pi}\int_{\mathbb{C}}\dd[2]{\alpha}\abs{
        \tr[
            \pqty{\rho_1 \otimes \ketbra{0}^{\otimes M-1}} \;
            U_M^\dag D(\alpha\vec{1}) U_M \;
            U_M^\dag \Pi U_M \;
            U_M^\dag D^\dag(\alpha\vec{1}) U_M
        ]
    } \\
    &= \frac{2}{\pi}\int_{\mathbb{C}}\dd[2]{\alpha}\abs{
        \tr[
            \pqty{\rho_1 \otimes \ketbra{0}^{\otimes M-1}}
            D_1(\sqrt{M}\alpha) \Pi D_1^\dag(\sqrt{M}\alpha)
        ]
    } \\
    &= \frac{2}{\pi}\int_{\mathbb{C}}\dd[2]{\alpha}\abs{
        \tr[\rho_1 \Pi_1(\sqrt{M}\alpha)]
        \prod_{m=2}^M \bra{0}\!\Pi_m\!\ket{0}
    } \\
    &= \int_{\mathbb{C}}\dd[2]{\alpha}\abs{
        W_{\rho_1}(\sqrt{M}\alpha)
    } = \frac{1}{M}\int_{\mathbb{C}}\dd[2]{\alpha}\abs{
        W_{\rho_1}(\alpha)
    }.
\end{aligned}
\end{equation}
\end{proof}

\subsection{\label{apd:rigorous-integral-error}Proof of \cref{prop:rigorous-integral-error}}
\begin{proof}[Proof of \cref{prop:rigorous-integral-error}]
When an upper bound on the total energy of the system ${\sum_{m=1}^M\tr}[ a_m^{\dagger} a_m \rho] \leq E$ is known, it is possible to obtain rigorous inequalities for the discretized integral.
Starting with the mean value theorem,
\begin{equation}
    \left|W_\rho(\vec{\alpha})-W_\rho(\vec{\beta})\right| \leq|\vec{\alpha}-\vec{\beta}| \max_{0 \leq p
        \leq 1}\left|\nabla W_\rho(p \vec{\alpha}+(1-p) \vec{\beta})\right|.
\end{equation}
The partial derivative of the Wigner function with respect to the complex conjugate variable $\alpha_m^*$ takes the form \cite{quantum-noise}
\begin{equation}
\begin{aligned}
    \partial_{\alpha_m^*} W_\rho(\vec{\alpha})
        &= 2 W_{a_m \rho}(\vec{\alpha})-2 \alpha_m W_\rho(\vec{\alpha})
        = 2\left(\frac{2}{\pi}\right)^M \operatorname{tr}\left[\left(a_m-\alpha_m\right) \rho \Pi(\vec{\alpha})\right].
\end{aligned}
\end{equation}
An analogous expression holds for the partial derivative with respect to $\alpha_m$.
Applying the Cauchy-Schwarz inequality and using the fact that $\Pi(\vec{\alpha})^2 = \mathbbm{1}$,
\begin{equation}
\begin{aligned}
    \left|\partial_{\alpha_m^*} W_\rho(\vec{\alpha})\right|^2
        &= 4\pqty{\frac{2}{\pi}}^{2 M}
            \abs\big{
                \tr[
                    \pqty{(a_m-\alpha_m)\sqrt{\rho}}
                    \pqty{\sqrt{\rho} \Pi(\vec{\alpha})}
                ]
            }^2
        \leq 4\pqty{\frac{2}{\pi}}^{2 M} \tr[
            \abs{a_m-\alpha_m}^2 \rho
        ].
\end{aligned}
\end{equation}
Using the triangle inequality for norms, we find ${\tr}[|a_m-\alpha_m|^2 \rho] \leq (\sqrt{\epsilon_m} + |\alpha_m|)^2$, where $\epsilon_m = {\tr}[a_m^\dagger a_m \rho]$ is the mean energy of the $m$th mode.
Since $|\nabla f|^2 = 2\sum_m(|\partial f/\partial \alpha_m|^2 + |\partial f/\partial \alpha_m^*|^2)$, we obtain
\begin{equation}
    \left|\nabla W_\rho(\vec{\alpha})\right|^2 \leq 16\left(\frac{2}{\pi}\right)^{2M}\left(\sqrt{E}+\sqrt{M}\max_{1 \leq m \leq M}|\alpha_m|\right)^2.
\end{equation}
Now, consider the discretization of the finite phase-space region $\omega$ into a square grid with grid spacing $\Delta \times \Delta$.
The integral becomes $
    \int_\omega \dd[2]{\alpha} |W(\vec{\alpha})|
    = \sum_{\{\alpha_0\}} \int_{R_{\Delta}(\alpha_0)} \dd[2]{\alpha} |W(\vec{\alpha})|
$, where $R_\Delta(\alpha_0)$ denotes a square centered at $\alpha_0$ with side length $\Delta$.
Both sides of the integral are well-defined as $R_\Delta(\alpha_0)$ is clearly Lebesgue-measurable, as is the finite disjoint union $\omega = \bigsqcup_{\{\alpha_0\}} R_\Delta(\alpha_0)$.
For each region $R_\Delta(\alpha_0)$, we can establish the lower bound,
\begin{equation}
    \int_{R_{\Delta(\alpha_0)}}
        \dd[2]{\alpha}
        |W(\vec{\alpha})|
    \geq \abs{
        \int_{R_{\Delta}(\alpha_0)}
            \dd[2]{\alpha}
            W(\vec{\alpha})
    }
    \geq \Delta^2\abs{W(\vec{\alpha}_0)} -\int_{R_{\Delta}(\alpha_0)}
        \dd[2]{\alpha}
        \abs{
            W(\vec{\alpha}) -
            W(\vec{\alpha}_0)
        }.
\end{equation}
Then, noting that for every $\alpha \in R_\Delta(\alpha_0)$, $|\vec{\alpha}-\vec{\alpha}_0| \leq \sqrt{M/2}\Delta$ for $\vec{\alpha} = (\alpha,\alpha,\dots,\alpha)$ and $| p \alpha+(1-p) \alpha_0 | \leq |\alpha_0|+\frac{\Delta}{\sqrt{2}}$,
\begin{equation}
\int_{R_{\Delta(\alpha_0)}}
        \dd[2]{\alpha}
        |W(\vec{\alpha})|
    \geq \Delta^2\abs{W(\vec{\alpha}_0)}
        -\pqty{\frac{2}{\pi}}^M\bqty{
            2 \sqrt{2 M} \Delta^3 \pqty{
                \sqrt{E}+\sqrt{M}\abs{\alpha_0}
            } +
            2 M \Delta^4
        }.
\end{equation}
By using the previous steps to bound the error term and the relation $W(\vec{\alpha}_0) = (2/\pi)^M {\tr}[\rho\Pi(\vec{\alpha}_0)]$, we arrive at the following GME witness: If the total energy is bounded by $E$, then
\begin{equation}
    \sum_{\{\alpha_0\}}\bqty{
        \Delta^2\abs{
            \tr[\rho \Pi(\vec{\alpha}_0)]
        } - 2\sqrt{2M}\Delta^3\pqty{
            \sqrt{E} + \sqrt{M}|\alpha_0|
        } - 2M\Delta^4
    } > \frac{\pi}{4\sqrt{M-1}}
\end{equation}
implies that $\rho$ is GME.
\end{proof}

\subsection{\label{apd:center-of-mass-nonclassicality-depth}Proof of \cref{prop:center-of-mass-nonclassicality-depth}}
\begin{proof}[Proof of \cref{prop:center-of-mass-nonclassicality-depth}]
First, note that given $\sqrt{M} a_+ \coloneqq \sum_{m=1}^{M} a_m$, we have $U_M^\dag a_+ U_M = a_1$ for the same reason included previously in the proof of \cref{prop:center-of-mass-absolute-volume}.
Then, the Wigner function of $\tr_-[\mathcal{U}_M(\rho_1)]$ is
\begin{equation}
    W_{\tr_-[\mathcal{U}_M(\rho_1)]}(\alpha) =
    \frac{2}{\pi}\tr[ U_M \pqty{\rho_1 \otimes \ketbra{0}^{\otimes M-1}} U_M^\dag \Pi_+(\alpha) ] =
    \frac{2}{\pi}\tr[ \pqty{\rho_1 \Pi_1(\alpha)} \otimes \ketbra{0}^{\otimes M-1}] =
    \frac{2}{\pi}\tr[\rho_1\Pi_1(\alpha)] =
    W_{\rho_1}(\alpha).
\end{equation}
Finally, by the correspondence between Wigner functions and operators in Hilbert space, this must mean that $\tr_-[\mathcal{U}_M(\rho_1)] = \rho_1$ and therefore $\tau_c(\tr_-[\mathcal{U}_M(\rho_1)]) = \tau_c(\rho_1)$.
\end{proof}

\section{\label{apd:tripartite-N00N}Smoothed Wigner function of Tripartite $N00N$ States}
From the proof of \cref{thm:GMN-reduced-state} in our companion Letter \cite{CV-GME-letter}, we know that we can interpret the smoothed Wigner function $\widetilde{W}_{\tr_{-}\rho}(0;\{\varrho_m\}_{m=1}^{M-2}) = (2/\pi){\tr}[\Pi_{+(2M-2)}(\rho\otimes\bigotimes_{m=1}^{M-2}\varrho_m)]$ as the expectation value of the beam splitter $\Pi_{+(2M-2)}$, which has the action $(M-1)\Pi_{+(2M-2)}a_m\Pi_{+(2M-2)} = (M-2)a_m - \sum_{m' \neq m}^{2(M-1)} a_{m'}$.

For $M=3$, $\rho = \ketbra{\nu_N^3}$, and $\varrho_m = \ketbra{0}$, we therefore have
\begin{equation}
\begin{aligned}
    \Pi_{+4}(\ket{\nu_N^3}\otimes\ket{0})
    &= \frac{1}{2^N\sqrt{3\,N!}}\pqty{
        \pqty{ a_1-a_2-a_3-a_4}^N +
        \pqty{-a_1+a_2-a_3-a_4}^N +
        \pqty{-a_1-a_2+a_3-a_4}^N
    }\ket{0000} \\
    &= \frac{(-1)^N}{2^N\sqrt{3\,N!}}
    \sum_{j=0}^N\sum_{k=0}^{j}\sum_{l=0}^{N-j}
    \binom{N}{j}\binom{j}{k}\binom{N-j}{l}
    \pqty{(-1)^{k} + (-1)^{j-k} + (-1)^l}
    a_1^{\dagger(j-k)}
    a_2^{\dagger k}
    a_3^{\dagger l}
    a_4^{\dagger(N-j-l)}
    \ket{0000} \\
    &= \frac{(-1)^N}{2^N\sqrt{3}}
    \sum_{j=0}^N\sum_{k=0}^{j}\sum_{l=0}^{N-j}
    \sqrt{
        \binom{N}{j}
        \binom{j}{k}
        \binom{N-j}{l}
    }
    \pqty{(-1)^{k} + (-1)^{j-k} + (-1)^l}
    \ket{j-k,k,l,N-j-l}.
\end{aligned}
\end{equation}
From this, we can read out that
\begin{equation}
    \widetilde{W}_{\tr_{-}\rho}(0;\{\ket{0}\}) = \frac{
        \bra{N000}\Pi_{+4}(\ket{\nu_N^3}\otimes\ket{0}) +
        \bra{0N00}\Pi_{+4}(\ket{\nu_N^3}\otimes\ket{0}) +
        \bra{00N0}\Pi_{+4}(\ket{\nu_N^3}\otimes\ket{0})
    }{2^{-1}\pi\sqrt{3}} \\
    = \frac{(-1)^N\pqty{2+(-1)^N}}{2^{N-1}\pi}.
\end{equation}
This is sufficient for odd tripartite $N00N$ states to be detected, but the smoothed Wigner function is still positive for even $N$. Instead, by using $\ket{1}$ in defining the kernel function to obtain the smoothed Wigner function $\widetilde{W}_{\tr_{-}\rho}(0;\{\ket{1}\})$,
\begin{equation}
\begin{aligned}
    \Pi_{+4}\pqty{\ket{\nu_N^3}\otimes\ket{1}}
    &= \Pi_{+4}a_4^\dag\pqty{\ket{\nu_N^3}\otimes\ket{0}}
    = \frac{1}{2}\pqty{-a_1^\dag-a_2^\dag-a_3^\dag+a_4^\dag}\Pi_{+4}\pqty{\ket{\nu_N^3}\otimes\ket{0}},
\end{aligned}
\end{equation}
and therefore,
\begin{equation}
\begin{aligned}
    \widetilde{W}_{\tr_{-}\rho}(0;\{\ket{1}\}) &= \frac{2}{\pi}\pqty{\bra{\nu_N^3}\otimes\bra{1}}\Pi_{+4}\pqty{\ket{\nu_N^3}\otimes\ket{1}} \\
    &= -\frac{\sqrt{N}}{\pi}\pqty{\bra{\nu_{N-1}^3}\otimes\bra{1}}\Pi_{+4}\pqty{\ket{\nu_N^3}\otimes\ket{0}} + \frac{1}{\pi}\pqty{\bra{\nu_N^3}\otimes\bra{0}}\Pi_{+4}\pqty{\ket{\nu_N^3}\otimes\ket{0}} \\
    &= -\frac{(-1)^N N\pqty{2-(-1)^N}}{2^N\pi}+\frac{(-1)^N\pqty{2+(-1)^N}}{2^N\pi} = -\frac{2(-1)^{N}(N-1) - (N+1)}{2^N\pi},
\end{aligned}
\end{equation}
which is negative for even $N > 2$, as desired.
\end{widetext}
\end{document}